\newdimen\proofrulebreadth \proofrulebreadth=.05em
\newdimen\proofdotseparation \proofdotseparation=1.25ex
\newdimen\proofrulebaseline \proofrulebaseline=2ex
\let\then\relax
\def\hfi{\hskip0pt plus.0001fil}
\mathchardef\squigto="3A3B
\newif\ifinsideprooftree\insideprooftreefalse
\newif\ifonleftofproofrule\onleftofproofrulefalse
\newif\ifproofdots\proofdotsfalse
\newif\ifdoubleproof\doubleprooffalse
\let\wereinproofbit\relax
\newdimen\shortenproofleft
\newdimen\shortenproofright
\newdimen\proofbelowshift
\newbox\proofabove
\newbox\proofbelow
\newbox\proofrulename
\def\shiftproofbelow{\let\next\relax\afterassignment\setshiftproofbelow\dimen0 }
\def\shiftproofbelowneg{\def\next{\multiply\dimen0 by-1 }%
\afterassignment\setshiftproofbelow\dimen0 }
\def\setshiftproofbelow{\next\proofbelowshift=\dimen0 }
\def\setproofrulebreadth{\proofrulebreadth}
\def\prooftree{
%
\ifnum  \lastpenalty=1
\then   \unpenalty
\else   \onleftofproofrulefalse
\fi
%
\ifonleftofproofrule
\else   \ifinsideprooftree
        \then   \hskip.5em plus1fil
        \fi
\fi
%
\bgroup
\setbox\proofbelow=\hbox{}\setbox\proofrulename=\hbox{}%
\let\justifies\proofover\let\leadsto\proofoverdots\let\Justifies\proofoverdbl
\let\using\proofusing\let\[\prooftree
\ifinsideprooftree\let\]\endprooftree\fi
\proofdotsfalse\doubleprooffalse
\let\thickness\setproofrulebreadth
\let\shiftright\shiftproofbelow \let\shift\shiftproofbelow
\let\shiftleft\shiftproofbelowneg
\let\ifwasinsideprooftree\ifinsideprooftree
\insideprooftreetrue
%
\setbox\proofabove=\hbox\bgroup$\displaystyle 
\let\wereinproofbit\prooftree
%
\shortenproofleft=0pt \shortenproofright=0pt \proofbelowshift=0pt
%
\onleftofproofruletrue\penalty1
}
\def\eproofbit{
%
\ifx    \wereinproofbit\prooftree
\then   \ifcase \lastpenalty
        \then   \shortenproofright=0pt  
        \or     \unpenalty\hfil         
        \or     \unpenalty\unskip       
        \else   \shortenproofright=0pt  
        \fi
\fi
%
\global\dimen0=\shortenproofleft
\global\dimen1=\shortenproofright
\global\dimen2=\proofrulebreadth
\global\dimen3=\proofbelowshift
\global\dimen4=\proofdotseparation
\global\count255=\proofdotnumber
%
$\egroup  
%
\shortenproofleft=\dimen0
\shortenproofright=\dimen1
\proofrulebreadth=\dimen2
\proofbelowshift=\dimen3
\proofdotseparation=\dimen4
\proofdotnumber=\count255
}
\def\proofover{
\eproofbit 
\setbox\proofbelow=\hbox\bgroup 
\let\wereinproofbit\proofover
$\displaystyle
}%
\def\proofoverdbl{
\eproofbit 
\doubleprooftrue
\setbox\proofbelow=\hbox\bgroup 
\let\wereinproofbit\proofoverdbl
$\displaystyle
}%
\def\proofoverdots{
\eproofbit 
\proofdotstrue
\setbox\proofbelow=\hbox\bgroup 
\let\wereinproofbit\proofoverdots
$\displaystyle
}%
\def\proofusing{
\eproofbit 
\setbox\proofrulename=\hbox\bgroup 
\let\wereinproofbit\proofusing
\kern0.3em$
}
\def\endprooftree{
\eproofbit 
  \dimen5 =0pt
%
\dimen0=\wd\proofabove \advance\dimen0-\shortenproofleft
\advance\dimen0-\shortenproofright
%
\dimen1=.5\dimen0 \advance\dimen1-.5\wd\proofbelow
\dimen4=\dimen1
\advance\dimen1\proofbelowshift \advance\dimen4-\proofbelowshift
%
\ifdim  \dimen1<0pt
\then   \advance\shortenproofleft\dimen1
        \advance\dimen0-\dimen1
        \dimen1=0pt
        \ifdim  \shortenproofleft<0pt
        \then   \setbox\proofabove=\hbox{%
                        \kern-\shortenproofleft\unhbox\proofabove}%
                \shortenproofleft=0pt
        \fi
\fi
%
\ifdim  \dimen4<0pt
\then   \advance\shortenproofright\dimen4
        \advance\dimen0-\dimen4
        \dimen4=0pt
\fi
%
\ifdim  \shortenproofright<\wd\proofrulename
\then   \shortenproofright=\wd\proofrulename
\fi
%
\dimen2=\shortenproofleft \advance\dimen2 by\dimen1
\dimen3=\shortenproofright\advance\dimen3 by\dimen4
%
\ifproofdots
\then
        \dimen6=\shortenproofleft \advance\dimen6 .5\dimen0
        \setbox1=\vbox to\proofdotseparation{\vss\hbox{$\cdot$}\vss}%
        \setbox0=\hbox{%
                \advance\dimen6-.5\wd1
                \kern\dimen6
                $\vcenter to\proofdotnumber\proofdotseparation
                        {\leaders\box1\vfill}$%
                \unhbox\proofrulename}%
\else   \dimen6=\fontdimen22\the\textfont2 
        \dimen7=\dimen6
        \advance\dimen6by.5\proofrulebreadth
        \advance\dimen7by-.5\proofrulebreadth
        \setbox0=\hbox{%
                \kern\shortenproofleft
                \ifdoubleproof
                \then   \hbox to\dimen0{%
                        $\mathsurround0pt\mathord=\mkern-6mu%
                        \cleaders\hbox{$\mkern-2mu=\mkern-2mu$}\hfill
                        \mkern-6mu\mathord=$}%
                \else   \vrule height\dimen6 depth-\dimen7 width\dimen0
                \fi
                \unhbox\proofrulename}%
        \ht0=\dimen6 \dp0=-\dimen7
\fi
%
\let\doll\relax
\ifwasinsideprooftree
\then   \let\VBOX\vbox
\else   \ifmmode\else$\let\doll=$\fi
        \let\VBOX\vcenter
\fi
\VBOX   {\baselineskip\proofrulebaseline \lineskip.2ex
        \expandafter\lineskiplimit\ifproofdots0ex\else-0.6ex\fi
        \hbox   spread\dimen5   {\hfi\unhbox\proofabove\hfi}%
        \hbox{\box0}%
        \hbox   {\kern\dimen2 \box\proofbelow}}\doll%
%
\global\dimen2=\dimen2
\global\dimen3=\dimen3
\egroup 
\ifonleftofproofrule
\then   \shortenproofleft=\dimen2
\fi
\shortenproofright=\dimen3
%
\onleftofproofrulefalse
\ifinsideprooftree
\then   \hskip.5em plus 1fil \penalty2
\fi
}
\newcommand{\act}[1]{\mathsf{actor}\{#1\}}
\newcommand{\rea}[2]{\mathsf{react}\{#1\Rightarrow{#2}\}}
\newcommand{\vx}{{\ensuremath{\tilde{x}}}}
\newcommand{\vy}{{\ensuremath{\tilde{y}}}}
\newcommand{\vu}{{\ensuremath{\tilde{u}}}}
\newcommand{\va}{{\ensuremath{\tilde{a}}}}
\newcommand{\vb}{{\ensuremath{\tilde{b}}}}
\newcommand{\vc}{{\ensuremath{\tilde{c}}}}
\newcommand{\vT}{{\ensuremath{\tilde{T}}}}
\newcommand{\val}{\mathsf{val}\,}
\newcommand{\self}{\mathsf{self}}
\newcommand{\s}{\,!\,}
\newcommand{\eqdef}{\stackrel{\triangle}{=}}
\newcommand{\red}{\longrightarrow}
\newcommand{\para}{~|~}
\newcommand{\Nu}{\boldsymbol{\nu}}
\newcommand{\New}[1]{(\Nu #1)}
\newcommand{\fn}[1]{\mathrm{fn}(#1)}
\newcommand{\bn}[1]{\mathrm{bn}(#1)}
\renewcommand{\b}[1]{\mathbf{#1}}
\newcommand{\dead}{{\b{0}}}
\newcommand{\fineST}{\mathsf{end}}
\newcommand{\gv}{\Gamma\vdash}
\newcommand{\escape}{\triangleright\ }
\newcommand{\marked}[1]{#1^\bullet}
\newcommand{\fullym}[1]{\mathsf{fullmrk}(#1)}
\newcommand{\escS}{\mbox{{\large $\&$}}}
\newcommand{\mergesuff}{\subsetplus}
\newcommand{\after}[1]{\lfloor_{#1}}
\newcommand{\actors}[1]{\mathsf{actors}(#1)}
\newcommand{\labeltree}[3]
{\begin{tabular}{l}
 \textsc{#1} \\[1mm]
 {\prooftree
 \text{$#2$}
 \justifies
 \raisebox{-1mm}{\text{$#3$}}
 \endprooftree}
 \end{tabular}
}
\newcommand{\labeltreeside}[4]
{\begin{tabular}{l}
 \textsc{#1} \\[1mm]
 {\infer[#4]{#3}{#2}}
 \end{tabular}
}
\newcommand{\escapes}{\triangleright}
\newcommand{\sseq}{<\!\!<}
\newcommand{\balanced}{\mathsf{balanced}}
\begin{document}
\title{Behavioural Types for Actor Systems}
\author{Silvia Crafa}

\institute{
Dipartimento di Matematica - Universit\`a di Padova
}
\maketitle

\begin{abstract}
Recent mainstream programming languages such as Erlang or Scala
have renewed the interest on the Actor model of concurrency. 
However, the literature on the static analysis of actor systems is
still lacking of mature formal methods.
In this paper we present a minimal actor calculus that takes as
primitive the basic constructs of Scala's Actors API. More precisely,
actors can send asynchronous messages, process received messages
according to a pattern matching mechanism, and dynamically create new
actors, whose scope can be extruded by passing actor names as message
parameters. 
Drawing inspiration from the linear types and session type theories
developed for process calculi, we put forward a behavioural type
system that addresses the key issues of an actor calculus.
We then study a safety property dealing with the
determinism of finite actor communication. 
More precisely, we show that well typed and \emph{balanced} actor 
systems are ($i$) deadlock-free and ($ii$) any message will eventually
be handled by the target actor, and dually no actor will indefinitely
wait for an expected message. 
\end{abstract}

\section{Introduction}
Recent mainstream programming languages such as Erlang or Scala
have renewed the interest on the Actor model
(\cite{Hewitt73,ActorsFirst}) of concurrent and 
distributed systems. In the Actor model a program is an ensemble of
autonomous computing entities communicating through asynchronous
message passing. Compared to shared-state concurrent processes, the
Actor model more easily avoids concurrency hazards such as data races
and deadlocks, possibly at the cost of augmenting the communication
overhead. On the other hand, compared to the channel-based 
communication of process calculi such as the $\pi$-calculus or the
Join-calculus, the actor abstraction better fits 
the object oriented paradigm found in mainstream programming
languages.

Actors can send asynchronous messages, process received messages
according to a pattern matching mechanism and dynamically create new
actors, whose scope can be extruded by passing actor names as message
parameters. The Actor model and the asynchronous process calculi then 
share similarities such as (bound) name passing, but they have
also many differences: actors have an identity (a name), they are
single threaded and they communicate by sending messages to the
mailbox of other actors rather than using channels. 
Despite their similarities, while a rich literature on type-based
formal methods has been developed for the 
static analysis of process calculi, few works deal with the Actor
model (see Section~\ref{sec:conclude} for a discussion of the related
work). In this paper we study a minimal actor calculus, AC, 
with the aim of bringing in the context of Actors the successful
techniques developed for process calculi. 

More precisely, drawing inspiration from the linear types and session
type theories developed for process calculi
(\cite{THK94,HVK98,KPT99,GayV10}), we put forward a type 
system that addresses the key issues of an actor calculus.
Programming an actor system entails the design of a communication
protocol that involves a (dynamic) set of actors;
we then study a behavioural type system for AC where actor types
encode the intended communication protocol, and the type checking
phase statically guarantees that runtime computation correctly
implements that protocol. Moreover, we study a safety property dealing
with the determinism of actor communication, so that 
in well typed and \emph{balanced} actor systems any message will
eventually be processed by the target actor, and viceversa, no actor
will indefinitely wait for an expected message.  
Dealing only with finite computation, we devise a simple technique to
let types also prevent deadlocks.

Even if AC only considers actors with finite computation, which is
clearly a strong limitation, proving that a finite system complies
with the intended communication protocol is not trivial, since
nondeterminism, fresh actor name passing and the asynchronous
semantics of the underlying model complicate the picture. 
As we said, our behavioural type system is reminiscent of the type
discipline of linear and session types. However, even if the actor
calculus shares with session types the idea of conceiving the
computation as the implementation of a specified communication
protocol, there are a number of key differences between the two
models (see Section~\ref{sec:conclude}). 
In~\cite{FeatErlang} session types are added to a Erlang-style core
actor calculus. However, in that paper session types appears an
orthogonal feature of the actor language, while we aim at 
showing in this paper that the reasoning underlying session types is
in some sense inherent to the Actor model. 
As a general comment that can guide the reader through the
technical part of the paper, one can think that session
types describe the flow of communications withing a single conversation
session. Instead, an actor's behavioural type takes the point of view
of an entity that might concurrently participate to different
(interleaved) conversations with different parties.

\section{The Actor Calculus}
\label{sec:AC}
We assume a countable set of actor names and a countable set of
variables, ranged over by $a,b,c$ and $x,y$
respectively. Identifiers, denoted with $u$, rage over names and
variables. We reserve the letter $m$ to range 
over a distinct set of message labels. 
The syntax of the Actor calculus AC comes from the basic
constructs of Scala's Actors API~\cite{ScalaBook,ScalaActors}:
$$
\begin{array}{rll}
\mathit{Expressions}\ \ e & ::= & \dead ~|~
            u \s m(\vu);e ~|~ \rea{m_i(\vx_i)}{e_i}_{i\in I}  
        ~|~\val a=\act{e};e
\end{array}
$$
The expression $u_1\s m(\vu_2);e$ sends to the actor $u_1$ the message
$m$ with the tuple of actual parameters $\vu_2$, and then continues as
$e$. According to the Actor model, sending a message is an asynchronous
action that just adds the message $m(\vu_2)$ into the mailbox
associated to the actor $u_1$. Message handling is carried over by the
$\mathsf{react}$ expression, that suspends the execution of the actor
until it receives a message $m_j\in\{m_i\}_{i\in I}$. When a matching
message is found in the actor mailbox, the execution is resumed and
the corresponding continuation is activated. 

New actors are dynamically created with expression 
$\val a=\act{e_1};e_2$, that corresponds to the
inline Scala primitive for actor creation. It defines and starts a new
actor with name $a$ and body $e_1$ and then continues as $e_2$.
The actor definition introduces a new, bound, name $a$ whose scope is
both the new actor's code $e_1$ and the continuation $e_2$.
In order to have a uniform semantics, we assume that a \emph{program}
is a top level sequence of actor definitions, 
while input and output expressions can only occur inside an actor
body. 
This is not restrictive since it would be
sufficient to assume an implicit main actor containing
the top level sequence of expressions. 
Anyway, observe that besides the top level definitions, new actors can
still be dynamically spawned by 
other actors anytime during the computation. 


%

The execution of a program  
spawns a bunch of concurrent actors that interact by message
passing. Therefore programs are represented runtime by 
configurations:
$$
\begin{array}{rll}
\mathit{Configurations}\ \ F & ::= & \dead ~|~
                 [a\mapsto M]\,a\{e\} ~|~ 
                   F|F   ~|~ \New{a}F  ~|~ 
                   e  
\end{array}
$$
$\New{a}F$ is a configuration where $a$ is a private actor name. 
While each actor is single threaded, a configuration is a
parallel composition of a number of active actors and an expression
$e$ containing the residual sequence of top level actor definitions.
An \emph{active actor} $a$ is represented runtime by $[a\mapsto
M]\, a\{e\}$, where $e$ is the residual body of the actor and
$[a\mapsto M]$ is its associated mailbox.
Mailboxes are lists of received messages of the form
$[a\mapsto m_1(\vb_1)\cdot\ldots\cdot m_k(\vb_k) ]$.
Message parameters are values (i.e. actor names)
since, according to Scala semantics, message parameters are
called by value as they are implemented as parameters of an 
Actor object's method invocation (\cite{ScalaBook}).

\begin{definition}[Free Names and Well Formed Configurations]
In input expressions 
formal parameters are bound variables, 
and actor definitions act as name binders.
We work with well formed configurations where
any bound name and variable is assumed to be distinct (Barendregt's
convention), and where 
in any input branching  $\rea{m_i(x_i)}{e_i}_{i\in I}$ 
the labels $m_i$ are pairwise distinct.
\end{definition}

\begin{figure}[t]
{\small
\begin{center}
\begin{tabular}{c}
\labeltree{(Par)}
{ F_1 \red F_1'}
{F_1\para F_2 \red F_1'\para F_2}
\quad\quad
\labeltree{(Res)}
{F \red F'}
{\New{a}F \red \New{a} F'}
\quad\quad
\labeltree{(Ended)}
{}
{[a\mapsto\varnothing]\, a\{\dead\} \red \dead}
\\ \\
\labeltree{(Struct)}
{F\equiv F'\ \red F'' \ \equiv F'''}
{F\red F'''}
\quad\quad
$\begin{array}{c}
\New{a}\New{b}F  \equiv  \New{b}\New{a}F\\
\New{a}(F\para F')  \equiv   F\para \New{a}F' \ \ \
a\notin\fn{F}\\
F\para \dead \equiv F \ \ \ F\para F'\equiv F'\para F\\
(F_1\para F_2)\para F_3\equiv
F_1\para(F_2\para F_3)\ \ \ \New{a}\dead\equiv\dead
\end{array}$
\\ \\
\labeltree{(top spawn)}
{}
{\val a=\act{e};e' 
 \red   \New{a}([a\mapsto\varnothing]\, a\{e\}\}\para e'))}
\\ \\
\labeltreeside{(spawn)}
{}
{[b\mapsto M]\, b\{\val a=\act{e};e'\} 
  \red \New{a}( [b\mapsto M]\, b\{e'\} \para
  [a\mapsto\varnothing]\, a\{e\}))}{a\notin \fn{M}}
\\ \\
\labeltree{(Send)}
{}
{[a\mapsto M]\, a\{e\}\,\para\, [b\mapsto M']\, b\{a\s m(\vc);e'\} \red 
 [a\mapsto M\cdot m(\vc)]\, a\{e\}\,\para\, [b\mapsto M']\,b\{e'\}}
\\ \\
\labeltreeside{(Receive)}
{}
{[a\mapsto M\!\cdot\! m_j(\vc)\!\cdot\! M']\ 
 a\{ \rea{m_i(\vx_i)}{e_i}_{i\in I}\} \red  
  [a\mapsto M\!\cdot\! M']\ a\{e_j\{^\vc/_{\vx_j}\} \}}{j\in I}
\vspace*{-2mm}
\end{tabular}
\end{center}
\caption{Operational semantics}
\label{fig:opsem1}
}
\end{figure}

The operational semantics is given in
Figure~\ref{fig:opsem1}. Most of the rules come directly from the
$\pi$-calculus. 
The rule {\sc (Ended)} states that a terminated actor $a$ with no
pending message in its mailbox can be garbage collected.
%
%
The rules {\sc (top spawn)} and {\sc (Spawn)} are used to spawn
a new actor respectively from the top level main thread and from
another actor. In both cases the new actor is activated by extending
the configuration with a new empty mailbox and an additional 
thread running the body of the new actor. 
The rules {\sc (Send)} and {\sc (Receive)} implement the Actor
communication model: an output expression adds a message to the
mailbox of the target actor, while an input expression scans the
mailbox for a matching message. Notice that the mailbox is not handled
as an ordered queue of messages, hence for instance, the 
configuration (where we omit message parameters)
$$
\begin{array}{ll}
[b\mapsto\varnothing]\ a\{\, b\s m_1;b\s m_2;\dead\}~|~\para
[b\mapsto\varnothing]\ b\{\, \mathsf{react} & \{m_1\Rightarrow
\rea{m_2}{e},
\\
& \ m_2\Rightarrow \rea{m_1}{e'}\}\}
\end{array}
$$
nondeterministically reduces either to $[b\mapsto\varnothing]b\{e\}$
or to $[b\mapsto\varnothing]b\{e'\}$.
In other words, besides being asynchronous, in the Actor model the
ordering of outputs is not guaranteed to be mirrored by the 
ordering of input handlers, which is instead the case of, e.g.,
asynchronous session types with buffered channels
~\cite{GayV10,CDYAsynch07}.

\begin{example}\label{ex:PingPongPang}
The following program defines two actors that meet in a three-way
handshake. The actor $b$ starts by sending a $ping$ message to $a$,
then waits for a $pong$ message that carries the name of the actor to
which it sends the final $pang$ message. The actor $a$ performs the
dual sequence of actions. 
$$
\begin{array}{l}
Pr = 
\val a=\act{\rea{ping(x)}{x\s pong(a);\rea{pang()}{\dead}}}\ ;
\\[2mm]
\quad\quad
\val b=\act{a\s ping(b);\rea{pong(y)}{y\s pang();\dead}}\ ;\ \dead
\end{array}
$$
%

\noindent
Now consider the case where the actor $Alice$ starts two sessions
of this protocol to interact both with $Bob$ and $Carl$ (Figure~\ref{fig:examples}).
In order to prevent interferences between the two sessions,  
a couple of private sub-actors are established for each protocol
session. This is similar to private sessions in the
$\pi$-calculus. 
$$
\begin{array}{l}
  Alice\ \{\ \val ab=\act{\rea{dest(y)}{P(y)}}\,;\,
               Bob\s new(ab)\,;\,
\\[2mm]
     \quad \quad\quad 
\val ac=\act{\rea{dest(y)}{P(y)}}\,;\,
               Carl\s new(ac)\,;\,\dead\  \} \ \para
\\[2mm]
  Bob\ \{\ \rea{new(z)}{\val ba=\act{Q(z)};
   z\s dest(ba);\dead}\  \} \ \para
\\[2mm]

 Carl\ \{\ \rea{new(z)}{\val ca=\act{Q(z)};
   z\s dest(ca);\dead}\  \}
\end{array}
$$
where 
$P(y) =  y\s ping.\rea{pong}{y\s pang;\dead}$
and\\ $Q(z)=\rea{ping}{z\s pong.\rea{pang}{\dead}}$.
\end{example}

\begin{example}
We can rephrase in the actor calculus a simple
example of multiparty communication protocol that captures the
interactions in a purchase system (Figure~\ref{fig:examples}):
$$
\begin{array}{l}
Buyer\{\, Seller\s buy(Buyer,item);\, 
                 \rea{price(z)}{\rea{details(w)}{...}}\,\}~\para
\\[2mm]
Seller\{\, \mathsf{react}\{\, buy(x,y)\Rightarrow x\s price(f(y));
\\[1mm]\hspace{3.3cm} 
    \val Shipper=\act{\rea{ship(x,y)}{x\s details(f'(y));...}};
    \\[1mm]   \hspace{3.3cm}         Shipper\s ship(x,y);... \};
\end{array}
$$
A $Buyer$ actor sends to the $Seller$ actor its name together with the
item he wants to buy, and waits for the price and the shipping
details. Dually, the $Seller$ handles the $buy$ message by sending to
the $Buyer$ the price $f(item)$ of the selected item and spawns a new
$Shipper$ actor that directly interacts with the $Buyer$ to finalize
the shipping. Observe that the $Buyer$ actor needs not to be aware
that he is actually interacting not only with the $Seller$ but also
with a (restricted) $Shipper$. This is a further difference with the
case of multiparty session types, where each interacting party is
identified by its endpoint of the session channel.
\end{example}

\begin{figure}[t]
\begin{center}
\begin{tabular}{cc}
\includegraphics[width=5.3cm]{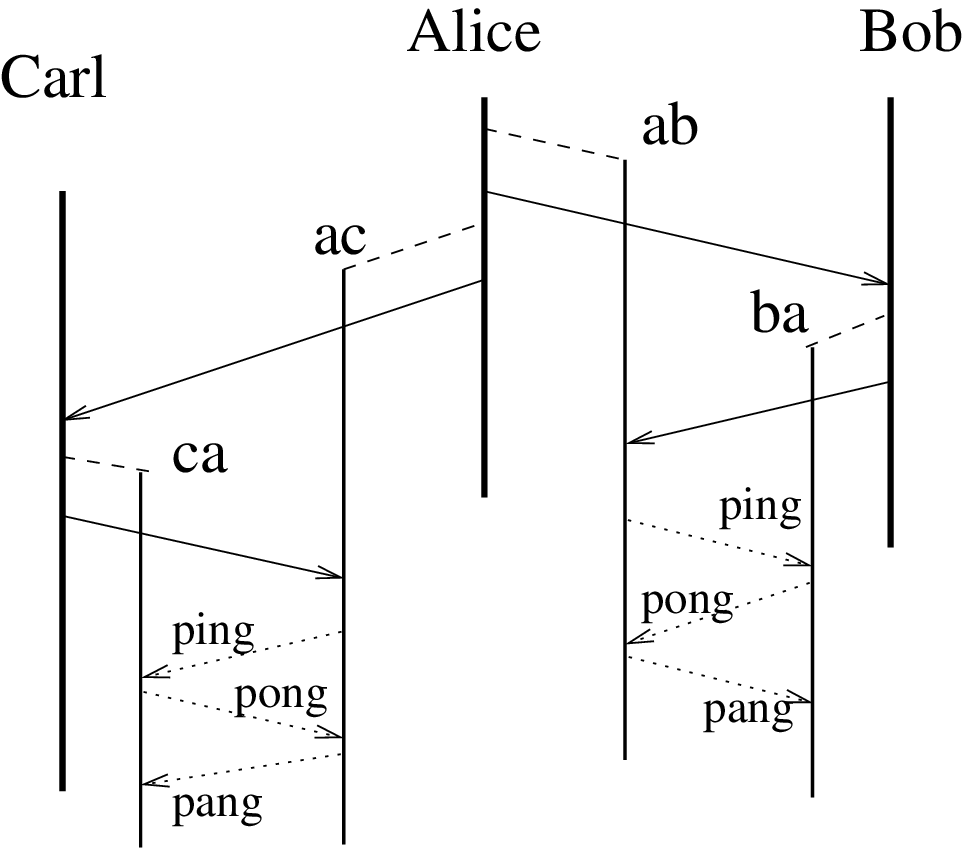}
\hspace{5mm}
&
\hspace{5mm}
\includegraphics[width=3.6cm]{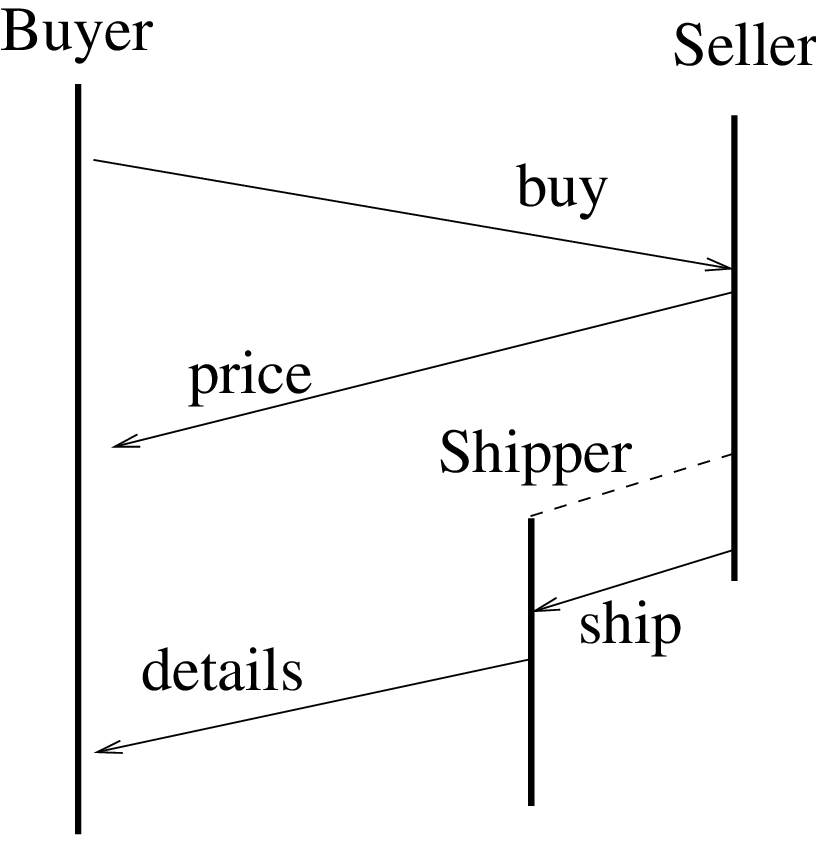}
\end{tabular}
\end{center}
\label{fig:examples}
\caption{Examples}
\end{figure}

\section{The Type System}

We assign behavioural types to
actor names so that a type describes the sequence of
inputs and outputs performed by the actor body. 
Moreover, inputs  are handled as linear resources, so that to
guarantee that for each expected input there is exactly one matching
output. 
We use the following syntax for the types associated to actor names,
where $\mathsf{NoMark}(S)$ means that $S$ does not contain any
marking: 
$$
\begin{array}{ll}
\mathit{Types}\ \ T ::=  [S]  \quad\quad & 
S  ::=  \fineST ~|~ !m(\vT).S 
  ~|~\&_{i\in I}\{?m_i(\vT_i).S_i\} 
\\[2mm]
   &  \ \ \ 
  ~|~  \marked{\&_{i\in I}} \{\marked{}?m(\vT).S,\ ?m_i(\vT_i).S_i\}
   \ \mbox{ with } \mathsf{NoMark}(S_i) \ \forall i\in I 
\end{array}
$$
Types $T$ are finite sequences of input and output actions.
Output action $\s m(\vT)$ is the type of an output expression that
sends the message $m$ with a tuple of parameters of type $\vT$. 
Dually, the input action $\&_{i\in I}\{?m_i(\vT_i).S_i\}$ offers the
choice of receiving one of the messages $m_i$ and continuing with
the sequence $S_i$. Differently from session types, we do not consider
output choices. Indeed our aim is not to provide an expressive
calculus for protocol specification, but to put forward a type based
technique to statically verify the protocol conformance of actors.
On the other hand, it would not be difficult to extend the type system
with output selection $\oplus_{i\in I}\{\s m_i(\vT_i).S_i\}$ along the
lines of input branches.

The type system then makes use of linear type assumptions to guarantee
that each input is eventually matched by exactly one output in the
system. Linear type assumptions are handled by means of markings. 
The marked action $\marked{\&_{i\in I}} \{\marked{}?m(\vT).S,\
?m_i(\vT_i).S_i\}$ pinpoints an input that is ``consumed'' by
one output expression.  
To illustrate, the actor $a\{b\s m(c)\}$ is well typed  
assuming $a{:}[\s m(T).S_a],\, b{:}[\marked{}? m(T).S_b]$ 
since $a$ consumes $b$'s input. On the other hand
the actor $b\{\rea{m(x)}{e}\}$ is well
typed assuming the non marked type $b:[? m(T).S_b]$, since $b$
offers an input without consuming it.
Moreover, to deal with branching inputs we have to ensure that
all the messages eventually received by an actor belong to the same
branch of computation. 
For instance, consider the actor $a:[\&\{?m_1.?m_2,\ ?m_3.?m_4\}]$
(where we omit message parameters), then the actor 
$b\{a\s m_1;a\s m_4\}$ is incorrect since it sends to $a$ two messages
belonging to alternative execution paths. Indeed, the typing of $b$
would require for $a$ the type assumption 
$a:[\marked{\&}\{\marked{}?m_1.?m_2,\ ?m_3.\marked{}?m_4\}]$,
which is prohibited by our syntax of types since it
contains two markings in two different branches. 

Another key point is the parallel composition of type assumptions,
that must be defined so that to ensure the linear usage of marked
inputs. To illustrate, given the type of the actor $a$ above, the
parallel composition $b\{a\s m_1\}\para c\{a\s m_3\}$ must be
prohibited since only one of the two messages will be handled by $a$
while the other one will stay pending in $a$'s mailbox. In other
terms, the two outputs compete for the same ``input resource''. 
Notice that the typing of $b$, resp. $c$, would require the assumption
$a:T_a^1=[\marked{\&}\{\marked{}?m_1.?m_2,\ ?m_3.?m_4\}]$,
resp. $a:T_a^2=[\marked{\&}\{?m_1.?m_2,\ \marked{}?m_3.?m_4\}]$. 
The fact that the same input choice is marked both in $T_a^1$ and
$T_a^2$ indicates that $b$ and $c$ consume the same input
choice, hence they cannot be composed in parallel.

More formally, we define a \emph{merge-mark} function that is used to
linearly compose type assumptions. More precisely, parallel threads
must assume the same type assumptions but with disjoint markings.

\begin{definition}[Merge-Mark]
Let $S,S'$ be two sequences that are equal
but for the markings. Then 
$[S]\uplus[S']{=}[S\uplus S']$ where the partial
function $\uplus$ is defined by
$$
\begin{array}{l}
\fineST\uplus\fineST  = \fineST
\quad\quad\quad\ \ 
\s m(\vT).S \uplus \s m(\vT).S'  =   \s m(\vT).(S\uplus S')
\\[1mm]
\&_{i\in I}\{?m_i(\vT_i).S_i\}\uplus \&_{i\in I}\{?m_i(\vT_i).S'_i\}  = 
\&_{i\in I}\{?m_i(\vT_i).(S_i\uplus S'_i)\}
\\[1mm]
\&_{i\in I}\{?m(\vT).S,?m_i(\vT_i).S_i\}\uplus \marked{\&}_{i\in
  I}\{\marked{}?m(\vT).S',?m_i(\vT_i).S_i\}  \ = \quad\quad\hfill\quad
\\[1mm] 
\hfill = \ \marked{\&}_{i\in I}\{\marked{}?m(\vT).(S\uplus S'),?m_i(\vT_i).S_i\}
\end{array}
$$
\end{definition}
The main clause is the last one, together with the symmetric
one that we omit. A marked input can only be merged with a
corresponding non marked input, and merging recursively applies
only to the marked branch. In this way we ensure that the same input
choice is consumed by exactly one output, and that further outputs 
only consume inputs belonging to the same branch of computation.

A type environment $\Gamma$ is a partial function assigning types
to names and variables. We use for $\Gamma$ the list notation.
Let be $\Gamma_1, \Gamma_2$ two type environments such that
$Dom(\Gamma_1)=Dom(\Gamma_2)$, then we denote by 
$\Gamma_1\uplus\Gamma_2$ the type environment obtained 
by merging the markings contained in the two environments, i.e.,
$\Gamma_1\uplus\Gamma_2=\{u:\Gamma_1(u)\uplus\Gamma_2(u) ~|~ u\in Dom(\Gamma_1)=Dom(\Gamma_2)\}$.  We use the notation
$\Gamma;a:T$ for environment update, that is 
$\Gamma\setminus\{a:\Gamma(a)\}\cup\{a:T\}$.

So far so good, however this is not enough since the scope extrusion
mechanism obtained by passing fresh actor names as message parameters
raises additional issues. Consider the actor
$a\{\rea{foo(x)}{x\s m(a)}\}$, $a$ consumes the 
$m$ input offered by some actor which will be dynamically
substituted for the bound variable $x$. In order to statically collect
the resources consumed by $a$, the typing of $a$ must assume 
for $x$ the marked type $[\marked{}?m(T).\fineST]$. 
A similar situation applies when new actors are spawned. For instance,
consider the previous actor $a$ in parallel with 
$c\{\val b=\act{\rea{m(y)}{e}};a\s foo(b) \}$. The parameter $x$ of
the $foo$ message is substituted with the fresh actor name $b$.
Hence in order to check that every input in the system is consumed,
besides the type of $x$, we must record the type of the fresh actor
$b:[?m(T).\fineST]$ and devise a way of matching the input
``consumed'' in $x$ with that ``offered'' by $b$.

We then rely on type judgements of the form $\gv F\escape\Delta$,
where $\Gamma$ collects the type assumptions about free names
and variables of $F$, while $\Delta$ collects type assumptions on
bound names and bound variables of $F$. Observe that working under
Barendregt's convention on bound names/variables, 
we avoid name conflicts.
We call $\Delta$ the \emph{escape environment},
and we let it preserve the branching structure of the computation
where alternative continuations can be activated when an input choice
is resolved.

\begin{definition}[Escape Envirnoment]
The escape environment $\Delta$ is a choice between alternative 
type environments defined by $\Delta::=\escS_{i\in I}\Gamma_i\para
\escS_{i\in I}\Delta_i$, where  $Dom(\Gamma_i)\cap
Dom(\Gamma_j)=\varnothing$ and $Dom(\Delta_i)\cap
Dom(\Delta_j)=\varnothing$ for $i,j\in I$.
\end{definition}
\noindent
We use the following notation for escape environment extension:
$(\escS_{i\in I}\Delta_i),\, u:T\eqdef\escS_{i\in I}\,
(\Delta_i,u{:}T)$, and 
$(\escS_{j\in J}\Delta_j)\, ,\, (\escS_{i\in
  I}\Delta_i)\eqdef\escS_{j\in J}\escS_{i\in I} \, (\Delta_j,\Delta_i)$.

\subsection{Typing Rules} 


The main judgement of the type system is 
$\Gamma\vdash F\escapes\Delta$. It means that 
the actors in $F$ execute the sequence of actions described 
by their type and the marked input actions in $\Gamma$ and $\Delta$
are \emph{exactly} those that are consumed by the actors in $F$. 
Moreover, $Dom(\Gamma)\cap Dom(\Delta)=\varnothing$ and
$\fn{F}\subseteq Dom(\Gamma)$ while $\bn{F}\subseteq Dom(\Delta)$.
We also use additional judgements: $\gv\diamond$ states that $\Gamma$
is well formed (according to standard rules given in
Figure~\ref{fig:TypeActor}), $\Gamma\vdash a:T$ 
states that the actor $a$ has type $T$ in $\Gamma$, and 
$\Gamma\vdash[a\mapsto M]$ states that the mailbox only
contains messages that are well typed according to the type that
$\Gamma$ assigns to $a$.
Finally, $\Gamma\vdash_a e\escapes\Delta$ states that $e$ is well
typed as the body of the actor $a$.

\paragraph{Type rules for actors.}

\begin{figure}[t]
{\small
\begin{center}
\labeltree{(Type Identif)}
{\Gamma, u:T\vdash\diamond}
{\Gamma, u:T\vdash u:T}
\quad\quad
\labeltree{(Ctx Empty)}
{}
{\varnothing\vdash\diamond}
\quad\quad
\labeltree{(Ctx Identif)}
{\Gamma\vdash\diamond\quad u\notin Dom(\Gamma)}
{\Gamma, u:T\vdash \diamond}
\end{center}
\begin{center}
\begin{tabular}{c}
\labeltree{(Type Spawn)}
{
\Gamma_1,b:[S_1]\vdash_b e_1 \escape\Delta_1
\quad
\Gamma_2,b:[S_2]\vdash_a e_2\escape\Delta_2
\quad
b\notin Dom(\Delta_1,\Delta_2)
}
{\Gamma_1\uplus\Gamma_2\vdash_a \val b=\act{e_1}; e_2 \escape
  \Delta_1,\Delta_2,\{b:[S_1\uplus S_2]\} }
\\ \\
\labeltree{(Type Send)}
{
\begin{array}{l}
 \gv a: [\s m(\vT).S_a] \quad \quad
\gv u: [S_u\,.\,\marked{\&}_{i\in I}\{\marked{}?m(\vT).S,\ ?m_i(\vT_i).S_i\}]
\\[2mm]
\vT\sseq \Gamma(\vu')\after{m}
\quad
 \Gamma\,;\, u: [S_u.\&_{i\in I}\{?m(\vT).S,\ ?m_i(\vT_i).S_i\}] \,;\, 
   a:[S_a]\vdash_a e\escape{\Delta}  
 \end{array}}
{\Gamma\vdash_a u\s m(\vu'); e \escape \Delta}
\\ \\
\labeltree{(Type Receive)}
{\gv a: [ \&_{i\in I} \{?m_i(\vT_i).S_i\}]\quad\quad 
\Gamma\, ;\, a:[S_i], \vx_i:\vT_i\vdash_a e_i
\escape\Delta_i 
\quad  i\in I
}
{\Gamma\vdash_a \rea{m_i(\vx_i)}{e_i}_{i\in I}\escape \escS_{i\in I} \,
  (\Delta_i,\{\vx_i:T_i\})}
\quad\quad 
\labeltree{(Type End)}
{\gv a: [\fineST] \quad \mathsf{NoMark}(\Gamma)}
{\Gamma\vdash_a \dead\escape\varnothing}
\vspace{-2mm}
\end{tabular}
\end{center}
}
\caption{Type Rules for Actors}
\label{fig:TypeActor}
\end{figure}

The rule {\sc (Type Spawn)} applies when the actor $a$ spawns a new
actor $b$. The type assumptions are split between those used by the
continuation of $a$'s body $e_2$, and those used by the body of the
new actor $e_1$. The same holds for escape assumptions, which 
collect the resources offered and consumed by bound names and
variables in $e_1$ and $e_2$. 
The name $b$ of the new actor must be fresh, and a type for $b$ must
be guessed. Since the scope of $b$ includes both
$e_1$ and $e_2$, both expressions are typed under a
suitable assumption for $b$. $S_1$ must correctly describe the
sequences of actions performed by $b$'s body $e_1$. 
Moreover, $S_1$ must contain marked input actions that correspond
to the input offered by $b$ and locally consumed by messages sent by
$b$'s body to $b$ itself. On the other hand, the marked inputs in
$S_2$ must correspond to the messages sent to $b$ in $e_2$. 
In the conclusion of the rule the escape environment globally collects
the type assumptions locally used for $b$, hence $S_1\uplus S_2$ must
be defined, that is $S_1$ and $S_2$ must be the same sequence of
actions with disjoint markings.


\noindent
Accordingly to {\sc (Type Send)}, when $a$ sends
the message $m(u')$ to the actor $u$:
\begin{enumerate}
\item the first action in the type of $a$ is the output of $m$;
\item the type of $u$ contains the input of $m$ as a marked input. 
  The matching input needs not to be the first action in the type 
  of $u$. This allows for instance that even if
  $u$ accepts a $\mathit{foo}$ message before  $m$, the actor
  $a$ is free to first output $u\s m$ and then $u\s \mathit{foo}$,
  according to the semantics of AC.
\item The continuation $e$ is typed in an updated environment 
where the marking of the matching input has disappeared from the
type of $u$ to record the fact that the resource has been already
consumed. Moreover the type of $a$ is updated to the continuation type
$[S_a]$ to record that the output action has been already performed.
Observe that this implies that the behavioural type assumed for an
actor changes (decreases) as long as the actor advances in its
computation.   
\end{enumerate}
As far as the typing of the message parameters are concerned,
let first introduce some notation: given two sequences $S$ and $S'$,
we write $S\sseq S'$ when $S$ is a suffix of $S'$ 
independently of the markings 
(see Appendix~\ref{app:def} for the formal definition). 
Given the output $u\s m(\vu')$, it might not be the case that
the actual parameters $\vu'$ have the types of the formal parameters
$\vT$. 
Since the type of an actor decreases as long as the actor computes, in
the asynchronous semantics the type of an actual parameter $u'$ at
sending time can be different from (longer than) the type $u'$ has
when $u$ processes the message. Then the type of a formal parameter
$T$ is in general a suffix of the type $\Gamma(u')$. Moreover, the
marked inputs 
contained in $\Gamma(u')$ are those that are consumed by the body of
$a$, while the markings contained in
$T$ correspond to the inputs offered by the formal parameter and
consumed by the actor that receives the message, that in general is
not $a$.
Summing up, the rule {\sc (Type Send)} requires (each type in the
tuple) $\vT$ to be a suffix of $\Gamma(\vu')$ (componentwise),
independently of the marked
actions. A stronger requirement is needed when a parameter of the
message coincides with the sender $a$, resp. the receiver $u$. 
In these cases the type of that formal parameter must be a suffix 
of the residual type of $a$ after the output, resp. the residual type
of $u$ after the input.
The rule uses the following predicate (componentwise extended to tuples
of types), where we call
$\Gamma(u')\after{m}$ the type of $u'$ ``after $m$'': 
$$
\begin{array}{ll}
T\sseq \Gamma(u')\after{m} \eqdef &
 \mbox{if } u'= a \mbox{ then } T\sseq [S_a]  \\
&    \mbox{ else if } u'=u \mbox{ then } T\sseq [S] 
          \mbox{ else } T\sseq \Gamma(u')
\end{array}
$$
To type an input expression the rule {\sc (Type Receive)} requires
the type of $a$ to indicate that the next action is a non marked
matching input action, and every continuation $e_i$ to be well typed
in the type environment where the type of $a$ has advanced to $[S_i]$
and the formal parameters $\vx_i$ have been added. 
Observe that if the input action were marked in the type of $a$, 
it would mean that the input is consumed by the continuation $e_i$,
which would result in a deadlock, as in, e.g., $a\{\rea{m}{a\s m}\}$. 
Finally, the names and the types of the formal parameters are 
recorded in the escape environment of the conclusion of the rule, 
preserving the branching structure of the computation.


According to the rule {\sc (Type End)}, the
expression $\dead$ is well typed assuming that the type of $a$ 
contains no more actions. Moreover, $\Gamma$ must contain no marked
input: a judgement like
$\Gamma,b:[\marked{}?m(T).S]\vdash_a\dead\escape\varnothing$
would mean that the typing of the body of $a$ has assumed to
consume the input of $b$, but it is not the case since the body is
terminated but the action $?m$ of $b$ is still marked. 


\begin{figure}[t]
{\small
\begin{center}
\begin{tabular}{c}
\labeltree{(Type Res Conf)}
{\Gamma,a:T\vdash F\escape\Delta\quad a\notin Dom(\Delta)}
{\gv \New{a} F \escape\Delta, a:T}
\quad
\labeltree{(Type Actor)}
{\gv [a\mapsto M]\quad  \gv_a e\escape\Delta}
{\gv [a\mapsto M]\ a\{e\}\escape\Delta}
\quad
\labeltree{(Type NoMail)}
{\Gamma\vdash\diamond}
{\Gamma\vdash[a\mapsto\varnothing]}
\\ \\
\labeltree{(Type Mailbox)}
{\gv [a\mapsto M] \quad\quad \Gamma\vdash a:[S_a] 
\quad 
?m(\vT).S\in \mathit{Inputs}([S_a]) \Rightarrow 
\vT\sseq \Gamma(\vb)\after{m}  }
{\Gamma\vdash [a\mapsto M\cdot m(\vb)]}
\\ \\
\labeltree{(Type Para)}
{\Gamma_1\vdash F_1\escape\Delta_1 \quad
 \Gamma_2\vdash F_2\escape\Delta_2  \quad
\actors{F_1}\cap\actors{F_2}=\varnothing }
{\Gamma_1,F_1\odot\Gamma_2,F_2\vdash F_1 ~|~F_2
  \escape\Delta_1,\Delta_2}
\quad 
\labeltree{(Type Dead)}
{\Gamma\vdash\diamond\quad \mathsf{NoMark}(\Gamma)}
{\Gamma\vdash \dead\escape\varnothing}
\\ \\
\labeltree{(Type Top Spawn)}
{
\Gamma_1,b:[S_1]\vdash_b e_1 \escape\Delta_1
\quad 
\Gamma_2,b:[S_2]\vdash e_2\escape\Delta_2
\quad b\notin Dom(\Delta_1,\Delta_2)
} 
{\Gamma_1\uplus\Gamma_2\vdash \val b=\act{e_1}; e_2 \escape
  \Delta_1,\Delta_2, b:[S_1\uplus S_2] }
\vspace{-2mm}
\end{tabular}
\end{center}
}
\label{fig:TypeConf}
\caption{Type Rules for Configurations}
\end{figure}

\paragraph{Type rules for Configurations.}
Rule {\sc (Type Res Conf)} shows that when a new name is introduced,
a corresponding type must be guessed. The new name is local to the
configuration $F$, but it is globally collected in the escape 
environment. The rule requires $a$ to be fresh in $\Delta$, but
the derivability of the judgement in hypothesis implies that
$\Gamma,a:T$ is well formed, hence $a\notin Dom(\Gamma)$. 

In order to type an active actor, the rule {\sc (Type Actor)} requires
the (residual) actor body $e$ to comply with the (residual) sequence
of actions in $\Gamma(a)$.
%
%
Moreover, the mailbox $[a\mapsto M]$ contains the list of messages
$M$ that have been received but not handled yet by the actor $a$.
A message in $M$ will be processed by the actor only if the type
$\Gamma(a)$ contains a matching input action. The rule {\sc (Type
  Mailbox)} does not require that every message has a corresponding
handler in $\Gamma(a)$. However, we show in the following that in well
typed systems 
mailboxes only contain messages that will eventually be
handled by the receiving actor.
Let $\mathit{Inputs}([S])$ be the set of top level
input actions contained in $S$.
Then the rule {\sc (Type Mailbox)} states that 
if a message in the mailbox corresponds to one of the
receivable inputs, then the type of the formal parameter is a suffix 
of the type that $\Gamma$ assigns to the actual parameter. The
notation $T\sseq \Gamma(v)\after{m}$ means that 
$\mbox{if } v=a \mbox{ then } T\sseq S  
\mbox{ else } T\sseq \Gamma(v)$, and it is extended to tuples of
types as expected.

The rules {\sc (Type Dead)} and {\sc (Type Top Spawn)} are similar to
the corresponding rules for actors, hence we reserve a final
discussion for the rule {\sc (Type Para)} for parallel composition. 
The rule {\sc (Type Para)} splits the type environment and the escape
environment so that to ensure that the resources consumed by $F_1\para
F_2$ are consumed either by $F_1$ or by $F_2$.
To illustrate, consider 
$$
a:T_a, b:T_b,...\vdash [a\mapsto M]\, a\{e\}\escape\Delta
\quad\quad\quad
a:T'_a, b:T'_b,...\vdash [b\mapsto M']\, b\{e'\}\escape\Delta'
$$
In order to correctly compose the two actors in parallel, the marked
actions in $T_a$, resp $T_b$, must be disjoint form those 
in $T'_a$, resp. $T'_b$. Moreover, since in the typing of an active
actor the behavioural type of the actor can be a suffix
of the initial type of that actor, 
we have that $T_a\sseq T'_a$ and $T'_b\sseq T_b$. Hence the merge-mark
function $\uplus$ must be extended so to compose a sequence with a
subsequence of actions. Let $S'\mergesuff S$ be a partial function 
defined as $S'\uplus S$ plus 
the following two cases, that apply when $S'$ is a proper suffix of
$S$: 
$$
\begin{array}{rlll}
S'\mergesuff \s m(\vT).S & = &  \s m(\vT).(S'\mergesuff S)
& \mbox{ if } S'\sseq S
\\[2mm]
S'\mergesuff \&_{i\in I}\{?m_i(\vT_i).S_i\} & = &
\&_{i\in I\setminus\{j\}}\{?m_i(\vT_i).S_i,\ ?m_j(\vT_j).(S'\mergesuff
S_j)\} & \mbox{ if }  S'\sseq S_j
\end{array}
$$
In particular we let be undefined the case 
$S'\mergesuff \marked{\&}_{i\in
  I}\{\marked{}?m(\vT).S,?m_i(\vT_i).S_i\}$. Indeed, if 
$T_a=[S']$ and $T'_a=[\marked{\&}_{i\in
  I}\{\marked{}?m(T).S,?m_i(T_i).S_i\}]$, it 
means that the actor $b$ sends the message $m$ to $a$ but the
corresponding input handler is not in the body of $a$ anymore. Hence, 
a type with a marked action must be composed 
with a type containing the same non-marked action, so that 
to ensure that the input ``consumed'' by a thread is actually
``offered'' by a parallel thread.
The type environment composition is defined as follows: 
{\small
$$
(\Gamma_1{,}F_1\odot\Gamma_2{,}F_2)\, (u)\eqdef \left\{
\begin{array}{ll}
\Gamma_2(u)\uplus \Gamma_1(u)
   & \mbox{ if } u\notin \actors{F_1}\cup\actors{F_2}       
\\[2mm]
\Gamma_1(u)\mergesuff \Gamma_2(u)
   & \mbox{ if } u\in\actors{F_1} 
\\[2mm]
\Gamma_2(u)\mergesuff \Gamma_1(u)
 & \mbox{ if } u\in\actors{F_2} 
\end{array}
\right.
$$
}
\noindent
where $\actors{F}$ collects the free names of 
active actors in $F$
 (see Appendix~\ref{app:def}). 

\begin{example}\label{ex:pingTyped}
Consider the program $Pr$ in Example~\ref{ex:PingPongPang}. 
We have that $\varnothing\vdash
Pr\escape\{a:T_a^1\uplus T_a^2,\ b:T_b,\ x:T_x,\ y:T_y\}$, which
comes from the following two judgements where $e_a$, resp.
$e_b$, is the body of the actor $a$, resp. $b$:
$$
\begin{array}{ll}
a:T_a^1\vdash_a  e_a\escape\{x:T_x\}
&
a:T_a^2,\ b:T_b\vdash_b e_b\escape\{y:T_y\}
\\[2mm]
%
T_a^1 = [  ?ping(T_x).\, \s pong(T_y).\,?pang. \fineST]
&
T_x=[\marked{}?pong(T_y).\,\s pang. \fineST]
\\[2mm]
T_a^2=[ \marked{}?ping(T_x).\,\s pong(T_y).\,?pang. \fineST]
&
T_y=[\marked{}?pang.\fineST] 
\\[2mm]
T_b  =  [\s ping(T_x).\, ?pong(T_y).\, \s pang.\fineST] 
 &
\end{array}
$$
\end{example}

\paragraph{Preventing deadlocks.}
The type system described so far is enough to prove that actor
implementations comply with the prescribed protocol, however program
execution may stuck in a deadlock state, as for the program
$
P=\val a=\act{\val b=\act{\rea{n}{a\s m}}\, ;\, \rea{m}{b\s n}}
$
which is so that  
$$P\red
[a\mapsto\varnothing]\, a\{\rea{m}{b\s n}\} \para
[b\mapsto\varnothing]\, b\{\rea{n}{a\s m}\}\not\red$$
In order to prevent deadlocks we propose a simple technique that
nicely copes with finite actor computation. We add more structure to
types: we modify
the syntax of types so to have output
actions of the form $T\s m(\vT).S$, where the additional component $T$
describes the sequence of actions performed by the target actor 
\emph{after processing the message $m$}. For instance,
$a:[[S_b]\s m(T).\fineST]$ is the type of an actor $a$ that sends the
message $m$ to an actor that eventually reads the message $m$ and then
continues as described by $S_b$. Let $b$ be the target of such a
message, and let be $b:[S.?m(T).S_b]$. In asynchronous communication, 
when $a$ delivers the message to $b$, it cannot know 
when $b$ will process such a message, but it can safely assume that
after the input of $m$, $b$ will continue as $S_b$.
Adding such a piece of information into types is enough to disallow
deadlocks. Indeed, the typing of the program $P$ above requires
(overlooking the markings) the assumptions
$a:[?m.T'\s n.\fineST], b:[?n.T''\s m.\fineST]$, that are not well
defined since $T'$ and $T''$ can only be mutual recursively defined:
$T'=[T''\s m.\fineST]$ and $T''=[T'\s n.\fineST]$. In other terms,
there are no (finite) types so that $P$ is well typed. 

It turns out that the refinement of types leaves unchanged most of the
type rules presented above. We only have to do a couple of
modifications.
First, the type assumption for the actor $a$ in the rule {\sc (Type
  Send)} must be $\gv a[[S']\s m(\vT).S_a]$, with $S'$ equal to $S$
but for the markings. That is we add to the
output action the sequence $S$ indicated in the type of the target
actor $u$ as the continuation after the input of $m$.
Then we have to adapt the relations between types that we introduced:
%
%
$\uplus$, resp. $\mergesuff$, are obtained
by adapting the clauses dealing with output actions, i.e.,
$T\s m(\vT).S \uplus T\s m(\vT).S'  =  T\s m(\vT).(S\uplus S')$,
resp. $S'\mergesuff T\s m(\vT).S  =   T\s m(\vT).(S'\mergesuff S)$.

\subsection{Properties of the Type System}

We show that the type system respects the semantics of AC,
i.e., well typed configurations reduce to well typed
configurations. However, since actor types 
decrease as long as the computation proceeds,
the subject reduction theorem relies on the following notion of
environment consumption. 

\begin{definition}[Environment Consumption]\mbox{ }
\begin{itemize}
\item
We write $\Gamma'\sseq\Gamma$ when $Dom(\Gamma)=Dom(\Gamma')$ and 
$\forall
u\in Dom(\Gamma)$, $\Gamma'(u)=[S'], \Gamma(u)=[S]$ such that 
$S'\sseq S$. 
\item
$\Delta'\sseq\Delta$ when $\Delta=\escS_{i\in I}\Gamma_i$, 
$\Delta'=\escS_{j\in J}\Gamma'_j$ with $J\subseteq I$,
$\Gamma'_j\subseteq\Gamma_j$ 
and $\forall u\in Dom(\Gamma'_j), \forall j\in J.$ 
$\Gamma'_j(u)\sseq\Gamma_j(u)$. 
\end{itemize}
\end{definition}

The substitution lemma allows a name $b$ to be substituted for a
variable $x$. In the lemma the type of $x$ is assumed to be 
a suffix of the type of $b$, and when $x$ is unified with $b$,
the markings assumed for $b$ must be updated so that they also contain
those assumed for $x$. With an abuse of notation, when $S'\sseq S$ we
let $S'\uplus S$ be defined as $S'\mergesuff S$ plus the clause
$S'\uplus \marked{\&}_{i\in I}\{\marked{}?m(T).S,?m_i(T_i).S_i\}  = 
\marked{\&}_{i\in I}\{\marked{}?m(T).(S'\uplus S),?m_i(T_i).S_i,\}$,
that were forbidden in the composition of parallel threads.
Such a clause here is not a problem since the substitution lemma 
applies within a single thread, i.e. within the body of an actor at
the moment of receiving an input, where it is safe to merge local
assumptions.

\begin{lemma}[Substitution]
Let be $\Gamma, x:T\vdash_a e\escape \Delta$
\begin{itemize} 
\item let $c$ be an actor name s.t. $a\neq c$ and 
 $T\sseq\Gamma(c)$, then 
 $\Gamma; c:T\uplus\Gamma(c)\vdash_ae\{^c/_x\}\escape\Delta$; 
\item let be $?m(T).S\in Input(\Gamma(a))$ such that $T\sseq [S]$,
then $\Gamma; a:T\uplus\Gamma(a)\vdash_a e\{^a/_x\}\escape\Delta$. 
\end{itemize}
\end{lemma}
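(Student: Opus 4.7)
The natural approach is structural induction on the derivation of $\Gamma, x{:}T\vdash_a e\escape\Delta$, performing a case analysis on the last typing rule. Both items are proved simultaneously, the difference being only which entry of the updated environment absorbs $T$: for the first item it is $c{:}T\uplus\Gamma(c)$ with $c$ free, for the second it is $a{:}T\uplus\Gamma(a)$. The invariant I carry through the induction is that substituting $c$ (resp.\ $a$) for $x$ preserves: (i) well-formedness of the environment; (ii) the sequence of input/output actions seen by the body; and (iii) the set of markings in $\Gamma,\Delta$. By Barendregt's convention, $x$ is distinct from every bound name/variable of $e$, so substitution commutes cleanly with all binders.

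The cases for {\sc (Type End)}, {\sc (Type Receive)} and {\sc (Type Spawn)} are routine. For {\sc (Type End)} the body is $\dead$ and $\mathsf{NoMark}(\Gamma,x{:}T)$ forces $T$ to be mark-free, so $T\uplus\Gamma(c)$ simply leaves $\Gamma(c)$ unchanged up to a tail-extension and the conclusion follows. For {\sc (Type Receive)} the substitution does not touch the bound parameters $\vx_i$, so I apply the IH to each branch $e_i$ with the same $T$ (possibly re-indexed by the branch) and then reassemble using $\escS$ as in the rule. For {\sc (Type Spawn)} the environment has already been split into $\Gamma_1\uplus\Gamma_2$, and since $x$ appears in at most one of the two premises the IH gives the result on that premise, after which the same $\uplus$-split recomposes with $c$'s updated type.

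The substantive case is {\sc (Type Send)} for $u\s m(\vu');e'$, which I split into three sub-cases. (a) $x\notin\{u\}\cup\vu'$: the IH on $e'$ suffices, once I check that the predicate $\vT\sseq\Gamma(\vu')\after{m}$ is stable under replacing $x{:}T$ by $c{:}T\uplus\Gamma(c)$; this reduces to observing that $\sseq$ ignores markings and that for $v\neq x$ the $\after{m}$ entry is unchanged. (b) $x=u$: the premise provides $x{:}[S_x\cdot\marked{\&}\{\marked{}?m(\vT).S,?m_i(\vT_i).S_i\}]$, so $T$ itself carries the marked input; the hypothesis $T\sseq\Gamma(c)$ together with the enlarged $\uplus$ clause (which is precisely the one permitted within a single thread) yields a $c$-type that contains the required marked input at the proper position, and the continuation $e'$ satisfies the IH with the updated $c$-type having that marking removed. (c) $x\in\vu'$: after substitution the parameter becomes $c$, and the required $\vT\sseq\Gamma(\vu'\{^c/_x\})\after{m}$ follows from the original condition $\vT\sseq(\Gamma,x{:}T)(\vu')\after{m}$ together with $T\sseq\Gamma(c)$ and transitivity of $\sseq$.

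The main obstacle is case (b) of {\sc (Type Send)}: to verify that $T\uplus\Gamma(c)$ is defined at all, and that after ``removing'' the consumed marking the resulting residual coincides with what the IH expects for $e'$. This is exactly why the statement of the lemma enlarges $\uplus$ with the extra clause $S'\uplus\marked{\&}_{i\in I}\{\marked{}?m(T).S,?m_i(T_i).S_i\}=\marked{\&}_{i\in I}\{\marked{}?m(T).(S'\uplus S),?m_i(T_i).S_i\}$ — this clause, forbidden for parallel composition, is sound here because both markings come from the same actor body and are thus witnessed by one and the same output occurrence. The analogous case for the second item ($x=a$, the subject) uses $?m(T).S\in Input(\Gamma(a))$ and $T\sseq[S]$ in exactly the same way, since $a$ playing the role of target of its own input simply re-enters the above sub-case (b) with $c$ replaced by $a$.
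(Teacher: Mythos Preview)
Your overall strategy---induction on the typing derivation with a case split on the last rule---matches the paper's proof, and your treatment of {\sc (Type Send)} is on target, including the observation that the enlarged $\uplus$ clause is precisely what makes sub-case~(b) go through.

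However, your handling of {\sc (Type Spawn)} contains an error. You write that ``$x$ appears in at most one of the two premises'', but this is false: the merge-mark operator $\uplus$ on environments is defined only for environments with \emph{equal} domains, so when $\Gamma,x{:}T$ is split as $(\Gamma_1,x{:}T_1)\uplus(\Gamma_2,x{:}T_2)$ the variable $x$ necessarily occurs in \emph{both} premises, with types $T_1,T_2$ satisfying $T_1\uplus T_2=T$. The inductive hypothesis must therefore be applied to each premise separately---after checking that $T_i\sseq\Gamma_i(c)$, which follows from $T\sseq(\Gamma_1\uplus\Gamma_2)(c)$ because $\sseq$ ignores markings---and the two resulting judgements must be recombined via {\sc (Type Spawn)}. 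The algebraic step needed at the end is $(T_1\uplus\Gamma_1(c))\uplus(T_2\uplus\Gamma_2(c))=T\uplus(\Gamma_1\uplus\Gamma_2)(c)$, an associativity/commutativity property of $\uplus$ that the paper uses tacitly. Note also that the first premise $\Gamma_1,x{:}T_1,b{:}[S_1]\vdash_b e_1$ has subject $b$, not $a$; you must observe $b\neq c$ (by freshness of $b$) before invoking the first item of the lemma there.
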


\begin{theorem}[Subject Reduction]\mbox{ }
If $\Gamma\vdash F \escape\Delta$ and $F \red F'$, 
then there exist $\Gamma'$ such that 
$\Gamma'\vdash F'\escape\Delta'$, with 
$\Gamma'\sseq\Gamma$ and $\Delta'\sseq\Delta$. 
\end{theorem}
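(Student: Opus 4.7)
The plan is to proceed by case analysis on the last rule used in $F \red F'$, invoking inversion on the derivation of $\gv F \escape \Delta$ in each case. I would first establish a \emph{structural congruence lemma}: if $F \equiv F'$ and $\gv F \escape \Delta$, then $\gv F' \escape \Delta$. This both settles the (Struct) case and, more usefully, lets one normalise the configuration so that in each communication case the two interacting actors appear as immediate operands of (Type Para). The lemma itself is a straightforward induction on $\equiv$, rearranging (Type Para), (Type Res Conf) and (Type Dead) derivations axiom by axiom. The inductive structural cases (Par) and (Res) then follow at once from the IH combined with (Type Para), resp.\ (Type Res Conf), preserving $\Gamma' \sseq \Gamma$ and $\Delta' \sseq \Delta$ componentwise.

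The termination and spawning cases are mostly bookkeeping. For (Ended), inversion on (Type Actor) and (Type End) forces $\Gamma(a)=[\fineST]$ with no marks on the remaining bindings, so erasing $a$ yields a derivation of the residual via (Type Dead). For (top spawn) and (spawn), inverting (Type Top Spawn), resp.\ the (Type Spawn) step nested inside (Type Actor), supplies typings of $e_1$ and $e_2$ under guessed types $[S_1]$ and $[S_2]$ for the fresh $b$; reassembly proceeds through (Type Actor), (Type Para) and (Type Res Conf), and the entry $b:[S_1\uplus S_2]$ migrates from $\Delta$ into the type environment under the new $\Nu$.

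The two substantial cases are (Send) and (Receive). For (Send), inversion on (Type Send) yields $b:[\s m(\vT).S_b]$ in the sender's local environment, a marked occurrence $\marked{}?m(\vT).S$ inside the type attributed to $a$, and the parameter-suffix condition $\vT \sseq \Gamma(\vc)\after{m}$. After reduction $b$'s type advances to $[S_b]$ (a suffix, hence $\sseq$) and the mark on $a$'s $?m$-branch disappears, leaving precisely the environment under which the continuation $e'$ is already known to be typable; the new message $m(\vc)$ in $a$'s mailbox is absorbed by (Type Mailbox), whose parameter premise is exactly the condition recorded at send time. Reassembly via (Type Para) works because the mark eliminated from the sender's view was the unique one consumed by $\mergesuff$ on $a$'s entry, so the composed environment is just the old one with one fewer mark. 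For (Receive), inversion on (Type Receive) gives branch typings $\Gamma; a:[S_i], \vx_i:\vT_i \vdash_a e_i \escape \Delta_i$ and an escape $\escS_{i\in I}(\Delta_i,\vx_i:\vT_i)$; iterating the Substitution Lemma to replace the formals $\vx_j$ by the actuals $\vc$ in the branch $j$ that fires gives the typing of $e_j\{\vc/\vx_j\}$, and selecting the chosen summand collapses the escape to $\Delta_j \sseq \escS_{i\in I}(\Delta_i,\vx_i:\vT_i)$.

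The main obstacle will be the marking bookkeeping, particularly in (Send): one must track that the mark removed on the sender's side is matched by a corresponding adjustment in the $\odot$-composition, and that the resulting global derivation still satisfies the implicit invariant that every mark corresponds either to an output pending in some actor body or to a message already waiting in some mailbox. A useful auxiliary step, worth stating as a separate lemma before attacking (Send), is an inversion principle locating, for each mark in $\gv F \escape \Delta$, the unique send or mailbox site responsible for it; with this in hand the remaining arithmetic on $\uplus$ and $\mergesuff$ becomes mechanical.
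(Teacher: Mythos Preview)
Your plan matches the paper's proof almost exactly: induction on the reduction derivation, a structural-congruence lemma for {\sc (Struct)}, bookkeeping for {\sc (Ended)}/{\sc (Spawn)}, and the Substitution Lemma for {\sc (Receive)}. Two points, however, are not quite right as stated.

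First, in {\sc (Ended)} you cannot ``erase $a$'': the relation $\Gamma'\sseq\Gamma$ requires $Dom(\Gamma')=Dom(\Gamma)$, so dropping $a$ breaks the conclusion. The paper simply keeps $a{:}[\fineST]$ in $\Gamma$ (it has no marks, so {\sc (Type Dead)} still applies) and concludes with the \emph{same} $\Gamma$.

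Second, in {\sc (Send)} the mailbox premise is \emph{not} ``exactly the condition recorded at send time''. The send is typed in the sender's environment $\Gamma_2$, giving $\vT\sseq\Gamma_2(\vc)\after{m}$, whereas {\sc (Type Mailbox)} for $a$ must be discharged in the receiver's environment $\Gamma_1$, i.e.\ you need $\vT\sseq\Gamma_1(\vc)\after{m}$. Bridging the two requires knowing that $\Gamma_2(u)\sseq\Gamma_1(u)$ for the relevant $u$, which the paper isolates as a separate lemma about $\odot$ (for $u\notin\actors{F_1}\cup\actors{F_2}$ the two types have equal length; for $u\in\actors{F_i}$ the $\Gamma_i$-type is a suffix of the other). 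Your proposed ``inversion principle locating each mark'' does not give you this suffix information and is heavier than what is actually needed; the $\odot$-lemma is both simpler and sufficient.
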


Let $F$ be a well typed closed system, i.e. 
$\varnothing\vdash F\escape\Delta$.
We have that any input action that is marked in $\Delta$ exactly
corresponds to one output expression in $F$ that eventually consumes
that input. We say that a well typed actor system is \emph{balanced}
whenever \emph{every} input in the system appears marked in 
the escape environment. As a consequence, in balanced systems every
input has a matching output and viceversa. Then 
(finite) balanced systems eventually terminate in the empty
configuration, correctly implementing the communication
protocol defined by the typing.


The definition of balanced environment checks that
every actor has a fully marked type, possibly with the
contribution of the markings contained in the types of a number of
variables. Indeed, since actor names are passed as parameters, the
inputs offered by an actor $a$ can be consumed by outputs directed to
variables that are dynamically substituted with $a$, as in
$b\{\rea{foo(x)}{x\s m}\}\para a\{b\s foo(a);\rea{m}{e}\}$.
Let be $\fullym{[S]}=[S']$ where $S'$ and $S$ are the same
sequence of actions, but in $S'$ every top level input is marked.

\begin{definition}[Balanced environment]
We write $\mathsf{balanced}(\Delta)$ when
\begin{itemize}
\item if $\Delta=\{x_1:T_1,\ldots,x_n:T_n\}$ then 
  $\mathsf{NoMark}(T_1),\ldots \mathsf{NoMark}(T_n)$; 
\item if $\Delta=\{u_1:T_1,\ldots,u_n:T_n\}$, then 
for any name $a\in Dom(\Delta)$ 
\begin{enumerate}
\item $\exists\ x_1,...,x_k\in Dom(\Delta)$ such that $\Delta(a)=T$,
  $\Delta(x_i)=T_i$ with $T_i\sseq T$ and 
  $((T\uplus T_1)\uplus\ldots\uplus T_k)=\fullym{T}$
\item
  $\mathsf{balanced}(\Delta\setminus\{a:T, x_1:T_1,...,x_k:T_k\})$;
\end{enumerate}
\item if $\Delta=\escS_{i\in I}\Gamma_i$, then
  $\mathsf{balanced}(\Gamma_i)$ for any $i\in I$.
\end{itemize}
\end{definition}


\noindent
Observe that the escape environment in
Example~\ref{ex:pingTyped} is balanced. Indeed we have that
$\Delta=\{a{:}T_a^1{\uplus} T_a^2,\ b{:}T_b,\ x{:}T_x,\ y{:}T_y\}$, 
$\Delta(y)\sseq \Delta(a)$ and 
$\Delta(y)\uplus \Delta(a)=\fullym{\Delta(a)}$.
Similarly, $\Delta(x)\sseq \Delta(b)$ and  
  $\Delta(x)\uplus \Delta(b)=\fullym{\Delta(b)}$.

Let $\red^*$ be the transitive closure of $\red$.
A final lemma shows that during the computation of well typed
actor systems mailboxes only contain messages that are eventually 
handled by the receiving actor.

\begin{lemma}\label{lem:mailbox}
If $\varnothing\vdash Pr\escape \Delta$ 
and $Pr\red^* \New{a_1,..,a_k}([a\mapsto M]a\{e\}\para F)$, 
then there exists 
$\Gamma$ such that $\Gamma\vdash [a\mapsto M]a\{e\}$ 
and for any $m(v)\in M$, there exists a matching input action
$?m(T).S$ that belongs to $Inputs(\Gamma(a))$.
\end{lemma}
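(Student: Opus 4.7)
The plan is to combine iterated Subject Reduction with structural inversion on the typing rules, and then to upgrade the conclusion via an invariant established by induction on the length of the reduction. First, from $\varnothing\vdash Pr\escape\Delta$ and $Pr\red^* F'$ with $F' = \New{a_1,\ldots,a_k}([a\mapsto M]\,a\{e\}\para F)$, repeated applications of Subject Reduction yield $\varnothing\vdash F'\escape\Delta'$ for some $\Delta'\sseq\Delta$. Iterated inversion on rule {\sc (Type Res Conf)} introduces assumptions for the bound names $a_1,\ldots,a_k$ into a type environment; inversion on {\sc (Type Para)} isolates the component $[a\mapsto M]\,a\{e\}$; and a final inversion on {\sc (Type Actor)} produces the desired $\Gamma$ with $\Gamma\vdash [a\mapsto M]\,a\{e\}\escape\Delta_a$, $\Gamma\vdash [a\mapsto M]$, and $\Gamma\vdash_a e\escape\Delta_a$.

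The non-trivial content of the lemma is that every $m(v)\in M$ is matched by an input in $\mathit{Inputs}(\Gamma(a))$, since rule {\sc (Type Mailbox)} alone does not force matching. I would therefore strengthen the statement into an invariant on reachable configurations: in any $F'$ with $Pr\red^* F'$, for every active actor $[c\mapsto M_c]\,c\{e_c\}$ occurring in $F'$ (modulo structural equivalence) and every message in $M_c$, the type assumed for $c$ in the induced typing contains a matching top-level input action. The proof proceeds by induction on the number of reduction steps, invoking Subject Reduction at each step to obtain a fresh typing derivation.

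The base case is immediate because $Pr$ contains no active actor with a non-empty mailbox. In the inductive step the only rules that modify a mailbox are {\sc (Send)} and {\sc (Receive)}. In the {\sc (Send)} case, the typing of the sender $b\{a\s m(\vc);e'\}$ is obtained by {\sc (Type Send)}, whose premises force the type ascribed to $a$ on the $b$-side to contain a marked input of the form $\marked{}?m(\vT).S$. Combining this with the typing of $[a\mapsto M]\,a\{e\}$ via {\sc (Type Para)} and the partial operation $\mergesuff$, the type of $a$ in the whole configuration carries $?m(\vT).S$ as a top-level input; applying Subject Reduction across the send step, this input survives in the updated type of $a$ and matches the newly appended message. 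In the {\sc (Receive)} case, the consumed message together with its matching input are simultaneously discarded: the continuation type $S_j$ of the selected branch becomes the new type of $a$, and the remaining messages in $M$ still find matches in $S_j$, since the $\mergesuff$ discipline on branching choices ensures that all messages in a well-typed mailbox lie on a single coherent execution path of the receiver's type.

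The main obstacle is precisely this last point: making rigorous, via a careful bookkeeping of how markings are allocated by {\sc (Type Para)} and removed or inherited by {\sc (Type Send)}, {\sc (Type Receive)}, and {\sc (Type Spawn)}, that the matching input produced by a send persists as a non-marked top-level action in the receiver's residual type across all intermediate reductions until the corresponding receive fires. Once this is established, extracting $\Gamma$ from the inversion above together with the invariant delivers the conclusion.
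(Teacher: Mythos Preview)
Your proposal is sound and rests on the same key fact as the paper---linearity of markings guarantees that an input matched by an output cannot vanish before the corresponding receive---but the two arguments are organised differently. The paper argues \emph{per message} and \emph{backward}: for a fixed $m(v)\in M$ it rewinds to the configuration where $a\s m(v)$ was executed, applies Subject Reduction there, reads off the marked $?m$ in the sender's assumption for $a$, and then closes by contradiction: if $?m\notin\mathit{Inputs}(\Gamma(a))$ at the end, some other output must have consumed that same handler, which would require a second marking of that input and is excluded by the merge-mark discipline. You instead maintain a \emph{forward} invariant across all actors and all reduction steps, which obliges you to treat the {\sc (Receive)} case for messages other than the one being consumed (your ``coherent execution path'' step). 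Both routes are valid; the paper's backward-plus-contradiction argument is a bit more economical because it bypasses that extra bookkeeping, whereas your invariant formulation is closer to what a fully detailed proof would need and scales more naturally if one later wants the property for all active actors at once.
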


\begin{theorem}[Safety]
If $\varnothing\vdash Pr\escape\Delta$ with  $\balanced(\Delta)$.
If $Pr\red^* F$ then either $F=\dead$ or $F\red
F'$ for some $F'$.
\end{theorem}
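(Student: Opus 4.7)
The plan is to follow a standard progress-style argument on top of Subject Reduction and balancedness. First I iterate Subject Reduction along $Pr\red^* F$ to obtain $\Gamma_F\vdash F\escape\Delta_F$ with $\Gamma_F=\varnothing$ and $\Delta_F\sseq\Delta$. A preliminary lemma is needed stating that $\balanced$ is preserved by reduction; this holds because each reduction step either consumes a matching marked input/output pair in lockstep (so the $\fullym$ equation continues to hold on the remaining actions) or trims an executed prefix, and the clauses of $\uplus$ and $\mergesuff$ propagate through the variable-to-name substitutions triggered by {\sc (Receive)}.

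Now assume $F\neq\dead$ and suppose for contradiction $F\not\red$. Up to structural congruence, write $F\equiv\New{\va}(e_{top}\para F_1\para\cdots\para F_k)$ where each $F_j=[a_j\mapsto M_j]\,a_j\{e_j\}$ and $e_{top}$ is either $\dead$ or a residual sequence of top-level spawns. If $e_{top}$ starts with a spawn, {\sc (top spawn)} fires. For each $F_j$ I case-split on the head of $e_j$: a leading spawn fires by {\sc (Spawn)} (Barendregt's convention supplies the freshness side condition); a leading send $u\s m(\vu');e$ fires by {\sc (Send)} provided the target $u$ is still an active actor in $F$. The latter holds because, if $u$ had already been removed via {\sc (Ended)}, its residual type in $\Delta_F$ would be $[\fineST]$, while balancedness together with {\sc (Type Send)} applied to the still-pending output forces a marked input $\marked{}?m$ to occur in $\Delta_F(u)$, a contradiction. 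A terminated $\dead$ with an empty mailbox fires by {\sc (Ended)}; Lemma~\ref{lem:mailbox} rules out a non-empty mailbox because $\Gamma_F(a)=[\fineST]$ contains no input action. A react fires by {\sc (Receive)} whenever its mailbox contains a matching message.

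The only remaining situation -- and the core of the theorem -- is that every active actor is blocked on a react and no mailbox holds a matching message. Exploiting balancedness, for each blocked actor $a$ whose type is $[\&_{i\in I}\{?m_i(\vT_i).S_i\}]$ pick the unique marked branch $?m_{j(a)}$ in $\Delta_F$; the matching marked output $[S_{j(a)}']\s m_{j(a)}(\vT_{j(a)})$ must appear somewhere in $F$. It cannot lie in $a$'s mailbox (else {\sc (Receive)} is enabled), and it cannot sit at the head of another actor's body (else {\sc (Send)} is enabled, as handled above), so it is located strictly past the blocking react of some other actor $b$. Setting $a\to b$ in this case, and recalling that the number of active actors in $F$ is finite, this relation must contain a cycle $a_1\to a_2\to\cdots\to a_n\to a_1$.

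The hard part, which I expect to be the main obstacle, is to turn this cycle into a syntactic contradiction via the refined output types $T\s m(\vT).S$. Along each edge $a_i\to a_{i+1}$, the output in $a_{i+1}$'s body carries, by {\sc (Type Send)}, an annotation $[S_i']$ that agrees up to markings with the continuation $S_i$ of $a_i$'s blocking react; since this output occurs strictly inside $a_{i+1}$'s future-past-its-own-react, any reasonable syntactic size measure on types gives $|S_i|<|S_{i+1}|$. Composing around the cycle would force $|S_1|<|S_1|$, which is impossible. To legitimate the claim that ``$[S_i']$ coincides up to markings with $a_i$'s current blocking continuation'' I expect to need a small runtime invariant -- provable by induction on the type derivation of $F$ and transported along $\red$ via Subject Reduction -- stating that the residual type assigned to each active actor precisely mirrors the head of its residual body. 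Once this invariant is in place, the size argument rules out the cycle and contradicts $F\not\red$, completing the proof.
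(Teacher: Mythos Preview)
Your proposal follows essentially the same route as the paper: Subject Reduction, a case split on the heads of the residual actor bodies (spawn, send, terminated, react), and exclusion of the all-blocked-on-react situation via the cyclic type dependency induced by the refined output annotations $T\s m(\vT).S$; the paper isolates this last part as a separate Lemma~\ref{lem:NoDeadlock} and phrases it informally as ``would require recursive typing'', which your explicit cycle-plus-size argument simply makes concrete. The only minor divergence is in the send case, where the paper appeals directly to the well-definedness of $\odot$ (specifically, that $\mergesuff$ is undefined when the active actor's type no longer contains the unmarked input) rather than to balancedness of $\Delta_F$, but the two justifications lead to the same conclusion.
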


\begin{example}
Consider the program $\hat{Pr}$:
$$
\begin{array}{l}
\val a=\act{\rea{ping(x)}{x\s pong(\self);\rea{pang()}{\dead}}}\ ;
\\[2mm]
\val b=\act{a\s ping(\self);\rea{pong(y)}{\dead}}\ ;\ \dead
\end{array}
$$
It is easy to see that $\hat{Pr}\red^*
[a\mapsto\varnothing]\ a\{\rea{pang()}{\dead}\}\para \dead\not\red$
since there is no actor sending the message that $a$ is waiting for.
Nevertheless, the program can be typed, but with an escape environment
that is not balanced. Indeed,  
$\varnothing \vdash \hat{Pr}\escape \Delta$ is derivable with
$\Delta= \{ 
a:[\marked{}?ping(T_x).T^*]$,
$b:[T^*\s ping(T_x).?pong([?pang.\fineST]).\fineST]$,
 $x:T_x{=}[\marked{}?pong([?pang.\fineST]).\fineST],\
 y:[?pang.\fineST] \ \}
$
where 
$T^*=[ [\fineST]\s pong([?pang.\fineST]).?pang.\fineST]$.
Then the communications are well typed but the fact that in $\Delta$
there is no mark for the input $?pang$ shows that the program does not
consume that resource, as indeed the operational semantics above has
shown. 
\end{example}

\begin{example}
As a final example observe that the deadlock program discussed above,
that is 
$
P=\val a=\act{\val b=\act{\rea{n}{a\s m}}\, ;\, \rea{m}{b\s n}}
$, is balanced since $a$'s input is matched by $b$'s output and
viceversa. However, the program stucks in a deadlock and indeed it is
not well typed, according to the safety theorem.
\end{example}

\section{Conclusions and Related Work}
\label{sec:conclude}

We presented AC, a core actor calculus designed around
the basic primitives of the Scala's Actors API, together with 
a behavioural type system and a safety property dealing with the
determinism of finite actor communications. 
We think that this work sheds light on how formal methods developed in
the context of linear types and session types for the
$\pi$-calculus can be profitably reused for the analysis of actor
systems. 

As we pointed out in the Introduction, our type system draws
inspiration from the formal methods developed in the contexts of 
linear types and session types for process algebras
(\cite{THK94,HVK98,KPT99,GayV10}).  
Indeed, the Actor programming model shares with session types the idea
of conceiving the computation as the implementation of a specified
communication protocol. However, there are a number of key differences
between the two models. First, in asynchronous session types
(\cite{GayV10,CDYAsynch07}) two sequential outputs are
(asynchronously) processed according to the sending order, while if an
actor $a$ sends two messages to the actor $b$, i.e. $a\{b\s m_1;b\s
m_2\}$, then $b$ is free to read/process the second message before
reading/processing the first one,
i.e. $b\{\rea{m_2}{\rea{m_1}{...}}\}$. This means 
that in the Actor model we have to deal with looser assumptions, in
that we cannot look at the order of input, resp. output, actions to
infer something about the order of the dual output, resp. input
actions. 
More importantly, in multiparty session types (\cite{HYC08,DY11}) the
set of interacting parties (or the set of interacting roles) in a
given session is known from the beginning, while an actor system is a
dynamic set of interacting parties. In particular, since new actors
can be created and actor names can be passed as parameters, the
communication capabilities of actors dynamically increase,
in a way similar to the scope extrusion phenomenon of the
$\pi$-calculus. 

Intuitively, session types describe the flow of
communications withing a single conversation session. An actor's
behavioural type instead takes the point of view of an entity that
concurrently participates to different (interleaved) conversations
with different parties. In this sense the Actor model share some
similarities with the Conversation
Calculus~\cite{ConversCalc,ConversTypes} CC, where 
processes concurrently participate to multiparty conversations and
conversation context identities can be passed around to allow
participants to dynamically join conversations. The Conversation
Calculus is designed to model service-oriented computation, and it is
centered around the notion of conversation context, which is a medium
where related interactions take place. The main difference with the
Actor model is that in CC named entities are the conversation
contexts, while named actors are the conversant parties. Hence the
powerful type system in~\cite{ConversTypes} associates types to
conversations rather than to actors.

To he best of our knowledge, there are few works dealing with 
type systems for Actor calculi. In \cite{AghaT04} the Actor model is
encodes in a typed variant of the $\pi$-calculus, where types are used
to ensure uniqueness of actor names and freshness of names of newly
created actors. 
The work in~\cite{ColacoPDS99} study a type system for a primitive
actor calculus, called CAP, which is essentially a calculus of
concurrent objects \`a la Abadi and Cardelli~\cite{AC94} where actors
are objects that dynamically, that is in response to method
invocation, change the set of available methods. Such a dynamic
behaviour may lead to so called ``orphan messages'' which may not be
handled by the the target actor in some execution path. In order to
avoid such orphan messages, a type system is proposed so to provide a
safe abstraction of the execution branches.
Finally, In~\cite{FeatErlang} a concurrent fragment of the Erlang
language is enriched with sessions and session types. The safety
property guaranteed by the typing is that all within-session messages
have a chance of being received and sending and receiving follows the
patterns prescribed by types. In our work we followed a
different approach: instead of adding sessions to an actor calculus,
we reused session type techniques to deal with the communication model
distinctive of Actors.

As for future work we plan to extend the AC calculus to deal with
recursive actors. Infinite computation requires a different
formulation 
of the safety property, since compliance with the intended protocol
does not reduces anymore to the termination of all actors with empty
mailboxes. Moreover, deadlock freedom requires more sophisticated
techniques, such as those in \cite{CDYAsynch07,ConversTypes} that are
based on a proof system that identifies cyclic dependencies between
actions.




\vspace{2mm}
{\bf Acknowledgements.} The author is indebted to Mariangiola
Dezani-Ciancaglini and Luca Padovani for insightful discussions about
session type theories.

\bibliographystyle{splncs03}
\bibliography{scala}

\appendix
\section{Notation and Useful Definitions}
\label{app:def}

Let $\&_{i\in I}^{\circ}\{?m_i(T_i).S_i\}$ stands for an input
action, either marked or not.

\begin{definition}[Suffix]
Let $S$ and $S'$ be two sequences of actions, we write $S\sseq S'$
when $S$ is a suffix of $S'$ according to the following rules:  
$$
\begin{array}{l}
\infer{\fineST\sseq S}{}
\quad\quad\quad
\infer{S\sseq S}{}
\quad\quad\quad
\infer{S\sseq T\s m(T').S'}{S\sseq S'}
\quad\quad\quad
\infer{S\sseq \&_{i\in I}^{\circ}\{?m_i(T_i).S_i\}}{S\sseq S_i\ \ \exists i\in I}
\end{array}
$$
\end{definition}

\begin{definition}[Inputs] The set of top level inputs contained in a
  type $T$, written $Inputs(T)$, is defined as follows:
$$
\begin{array}{l}
Inputs([T\s m(T').S])=Inputs([S])
\\[2mm]
Inputs([\&^\circ_{i\in
  I}\{^\circ?m_i(T_i).S_i\}])=\bigcup_{i\in I}\{?m_i(T_i).S_i\}\cup
Inputs([S_i])\\[2mm]
Inputs([\fineST])=\varnothing
\end{array}
$$
\end{definition}

\begin{definition}[Free active actors]
The set $\actors{F}$ of free active actors in the configuration
$F$ is defined as follows:
$$
\begin{array}{ll}
\actors{\dead}=\actors{e}=\varnothing
&
\actors{[a\mapsto M]a\{e\}}=\{a\}
\\[2mm]
\actors{F_1\para F_2}=\actors{F_1}\cup\actors{F_2}\ \ \ 
&
\actors{\New{a}{F}}=\actors{F}\setminus\{a\}
\end{array}
$$
\end{definition}

\section{Proof Sketches}



\noindent
{\bf Substitution Lemma} 
Let be $\Gamma, x:T\vdash_a e\escape \Delta$ with $x\neq a$,
\begin{itemize} 
\item let $c$ be an actor name s.t. $a\neq c$ and 
 $T\sseq\Gamma(c)$, then 
 $\Gamma; c:T\uplus\Gamma(c)\vdash_ae\{^c/_x\}\escape\Delta$; 
\item let be $?m(T).S\in Input(\Gamma(a))$ such that $T\sseq [S]$,
then $\Gamma; a:T\uplus\Gamma(a)\vdash_a e\{^a/_x\}\escape\Delta$. 
\end{itemize}
\begin{proof}
The proof is by induction on the derivation of the judgement 
 $\Gamma, x:T\vdash_a e\escape \Delta$. The base case is when
 $e=\dead$. In this case the hypothesis is  $\Gamma, x:T\vdash_a
 \dead\escape \varnothing$, $\Gamma,x:T\vdash a:[\fineST]$ and
$\mathsf{NoMark}(\Gamma,x:T)$, and the thesis is $\Gamma;
c:T\uplus\Gamma(c)\vdash_a\dead\escape\varnothing$. Hence it is
sufficient to show that 
$\Gamma; c:T\uplus\Gamma(c)\vdash a:[\fineST]$, which is immediate if
$a\neq c$. On the other hand, if $a=c$ then from the first hypothesis
we have $\Gamma(a)=[\fineST]$ and from the second one we have
$T\sseq[\fineST]$, hence $T=[\fineST]=T\uplus[\fineST]$ as desired.
For the inductive cases we proceed by a case analysis on the last rule
that has been used to derive $\Gamma, x:T\vdash_a e\escape \Delta$:
\begin{description}
\item[({\sc Type Spawn})] In this case the hypothesis are
$\Gamma_1,x:T_1\uplus\Gamma_2,x:T_2\vdash_a 
\val b=\act{e_1}; e_2 \escape
  \Delta_1,\Delta_2,\{b:[S_1\uplus S_2]\} $
and $T=T_1\uplus T_2 \sseq (\Gamma_1\uplus\Gamma_2)(c)$. The first
judgement comes from
$\Gamma_1,x:T_1,b:[S_1]\vdash_c e_1 \escape\Delta_1$ and
$\Gamma_2,x:T_2,b:[S_2]\vdash_a e_2\escape\Delta_2$. Observe that $b$
is fresh, then $b\neq c$, then by induction
we have $\Gamma_1;c:T_1\uplus\Gamma_1(c),b:[S_1]\vdash_b
e_1\{^c/_x\} \escape\Delta_1$ and 
$\Gamma_2;c:T_2\uplus\Gamma_2(c),b:[S_2]\vdash_a
e_2\{^c/_x\}\escape\Delta_2$. Then by {\sc (Type Spawn)} we have 
$\Gamma'\vdash_a\val b=\act{e_1\{^c/_x\}}; e_2\{^c/_x\} \escape
  \Delta_1,\Delta_2,\{b:[S_1\uplus S_2]\}$, where 
$\Gamma'=\Gamma_1;c:T_1\uplus\Gamma_1(c)\, \uplus\,
\Gamma_2;c:T_2\uplus\Gamma_2(c)$. Now observe that 
$\Gamma'= (\Gamma_1\uplus\Gamma_2);
c:(T_1\uplus\Gamma_1(c))\uplus(T_2\uplus\Gamma_2(c))= 
(\Gamma_1\uplus\Gamma_2);c:(T\uplus(\Gamma_1\uplus\Gamma_2)(c))$ as
desired. 
\item[{\sc (Type Send)}] For simplicity let assume a single message
  parameter. In this case the hypothesis is 
$\Gamma,x:T\vdash_a u\s m(u'); e \escape \Delta$, 
which come from 
     \begin{enumerate}
     \item $\Gamma,x:T\vdash a: [[S]\s m(T').S_a]$;
     \item $\Gamma,x:T\vdash u: [S_u\,.\,\marked{\&}_{i\in
         I}\{\marked{}?m(T').S,\ ?m_i(\vT_i).S_i\}]$;
     \item $T'\sseq (\Gamma,x:T)(u')\after{m}$, that is  
       if  $u'= a$ then $T'\sseq [S_a]$ else if $u'=u$ then $T'\sseq
       [S]$ else $T'\sseq \Gamma,x:T(u')$ and
     \item $\Gamma,x:T\,;\, u: [S_u.\&_{i\in I}\{?m(T').S,\
       ?m_i(\vT_i).S_i\}] \,;\, a:[S_a]\vdash_a
       e\escape{\Delta}$. Notice that $x\neq a$, then we have two
       subcases:
       \begin{enumerate}
         \item $x\neq u$, then $\Gamma;\, u: [S_u.\&_{i\in
        I}\{?m(T').S,\ ?m_i(\vT_i).S_i\}] \,;\, a:[S_a],x:T\vdash_a
       e\escape{\Delta}$
         \item $x=u$ and $u\neq a$, then $\Gamma;\,
           a:[S_a],x:[S_u.\&_{i\in I}\{?m(T').S,\ ?m_i(\vT_i).S_i\}]
           \vdash_a  e\escape{\Delta}$. 
       \end{enumerate}
     \end{enumerate}
Now, since $x\neq a$, but it might be the case that $x=u$
and/or $x=u'$. Then we have to prove that
$\Gamma;c:T\uplus\Gamma(c)\vdash_a u\{^c/_x\}\s m(u'\{^c/_x\});
e\{^c/_x\} \escape \Delta$, which comes from the following judgements,
implied by the enumeration above:   
\begin{itemize}
     \item $\Gamma\vdash a: [[S]\s m(T').S_a]$, since $x\neq a$, hence
          $\Gamma;c:T\uplus\Gamma(c)\vdash a: T_a$, where
          \begin{itemize} 
          \item if $a\neq c$ then $T_a= [[S]\s m(T').S_a]$ 
          \item if $a=c$, then $T_a=T\uplus[[S]\s m(T').S_a]$ together
           with the hypothesis $T\sseq [S]$
           \end{itemize} 
     \item $\Gamma;c:T\uplus\Gamma(c)\vdash u\{^c/_x\}: T_u$, where
         \begin{itemize} 
          \item if $u\neq c,x$, i.e. $u\{^c/_x\}=u$, then 
           $T_u=[S_u\,.\,\marked{\&}_{i\in I}\{\marked{}?m(T').S,\
              ?m_i(\vT_i).S_i\}]$
          \item  if $u=c$ or $u=x$, i.e., $u\{^c/_x\}=c$, then 
              $T_u=T\uplus [S_u\,.\,\marked{\&}_{i\in
                I}\{\marked{}?m(T').S,\  ?m_i(\vT_i).S_i\}]$,
             together with the hypothesis 
           $T\sseq\Gamma(c)$ and $T\sseq [S]$ for $c=a$;
          \end{itemize}     
     \item by induction we have
       \begin{itemize}
        \item in the case 4.(a), i.e. $x\neq u,a$,
       $(\Gamma; u: T'_u;
       a:[S_a]);c:T\uplus(\Gamma;u:T'_u;a:[S_a])(c)\vdash_a
       e\{^c/_x\}\escape{\Delta}$ where  
       $T'_u= [S_u.\&_{i\in I}\{?m(T').S,\ ?m_i(\vT_i).S_i\}]$.
       \item in the case 4.(b), i.e. $x=u\neq a$,
       $(\Gamma; a:[S_a]);c:T\uplus(\Gamma;a:[S_a])(c)\vdash_a
       e\{^c/_x\}\escape{\Delta}$ with $T= [S_u.\&_{i\in
         I}\{?m(T').S,\ ?m_i(\vT_i).S_i\}]$ 
       \end{itemize}
     In both cases the environment is equal to \\
      $(\Gamma;c:T\uplus\Gamma(c)); u: [S_u.\&_{i\in I}\{?m(T').S,\ ?m_i(\vT_i).S_i\}]; a:[S_a]$
    \item let prove that  
      $T'\sseq (\Gamma;c:T\uplus\Gamma(c))(u'\{^c/_x\})\after{m}$:
      \begin{itemize}
        \item if $u'\{^c/_x\}= a$ then either $u'=a$, then 
          $T'\sseq [S_a]$  by the hypothesis 3. above, 
          or $u'=x$ and $a=c$, and in this case the hypothesis 3.
          above gives $T'\sseq \Gamma,x:T(u')$, that is $T'\sseq T$. 
          Moreover, from the hypothesis $T\sseq [S]$ and the fact that
          $S\sseq S_a$, we have $T'\sseq [S_a]$.
        \item if $u'\{^c/_x\}= u\{^c/_x\}$ then either $u'=u$, then 
          $T'\sseq [S]$  by the hypothesis 3. above, or $u'\{^c/_x\}=
          u\{^c/_x\}=c$ and as in the previous item $T'\sseq T$ and by
          hypothesis $T\sseq [S]$, hence $T'\sseq [S]$.
       \item  otherwise we have to show that 
          $T'\sseq (\Gamma;c:T\uplus\Gamma(c))(u'\{^c/_x\})$, that is
          either $T'\sseq (\Gamma;c:T\uplus\Gamma(c))(c)$ or $T'\sseq
          (\Gamma;c:T\uplus\Gamma(c))(u')$.
          The first case come from the fact that by hypothesis
          3. above we have $T'\sseq T$, 
          that together with $T\sseq\Gamma(c)$ gives what desired.
          The sencond case come from the fact that by hypothesis
          3. above we have $T'\sseq \Gamma(u')$, which is what desired
          since in this case $u'\neq c$.
     \end{itemize}
    \end{itemize}
\item[{\sc (Type Receive)}] In this case the hypothesis is 
$\Gamma,x:T\vdash_a \rea{m_i(\vy_i)}{e_i}_{i\in I}\escape \escS_{i\in
  I} \, (\Delta_i,\{\vy_i:T_i\})$, which comes from
$(i)~\Gamma,x:T\vdash a: [ \&_{i\in I} \{?m_i(\vT_i).S_i\}]$ and 
$(ii)~\Gamma,x:T\, ;\, a:[S_i], \vy_i:\vT_i\vdash_a e_i \escape\Delta_i$
for $i\in I$. We can assume that $x\neq \vy_i$, hence from the last
judgement we have $\Gamma;a:[S_i], \vy_i:\vT_i,x:T\vdash_a e_i
\escape\Delta_i$, which gives by inductive hypothesis
$\Gamma;a:[S_i],\vy_i:\vT_i;c:T\uplus(\Gamma;a:[S_i])(c)\vdash_a
e_i\{^c/_x\} \escape\Delta_i$, that is
$\Gamma;c:T\uplus(\Gamma)(c);a:[S'_i],\vy_i{:}\vT_i\vdash_a
e_i\{^c/_x\} \escape\Delta_i$ where $S'_i=S_i\uplus T$ if $c=a$,
otherwise $S'_i=S_i$. Now, from $(i)$ and the hypothesis $T\sseq
[S_i]$ of the lemma, we have
$\Gamma;c:T\uplus(\Gamma)(c)\vdash a: [ \&_{i\in I}
\{?m_i(\vT_i).S'_i\}]$,hence we conclude
$\Gamma;c:T\uplus(\Gamma)(c)\vdash_a
\rea{m_i(\vy_i)}{e_i\{^c/_x\}}_{i\in I}\escape \escS_{i\in I} \,
(\Delta_i,\{\vy_i:T_i\})$ by {\sc (Type Receive)}.
\end{description}

\end{proof}

\medskip\noindent
\begin{lemma}\label{lem:odot}
Let be $\Gamma=\Gamma_1,F_1\odot\Gamma_2,F_2\vdash F_1\para
F_2\escape\Delta_1,\Delta_2$ a derivable judgement. Then 
\begin{itemize}
\item For all $u\notin\actors{F_1}\cup\actors{F_2}$,
  $\Gamma(u)=\Gamma_1(u)\uplus\Gamma_2(u)$, hence 
$\Gamma_1(u)\sseq\Gamma_2(u)$ and $\Gamma_1(u)\sseq\Gamma_2(u)$,
i.e. the type of $u$ has the same length in $\Gamma_1$ and $\Gamma_2$
\item For all $u\in\actors{F_1}$, it holds
$\Gamma_1(u)\sseq\Gamma_2(u)$
\item For all $u\in\actors{F_2}$, it holds
$\Gamma_2(u)\sseq\Gamma_1(u)$
\end{itemize}

\end{lemma}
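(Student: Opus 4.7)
The proof is essentially by unfolding the definitions of the composition operator $\odot$, together with its component partial operators $\uplus$ and $\mergesuff$, and checking that their definedness forces the stated suffix relations. Since the judgement $\Gamma_1, F_1 \odot \Gamma_2, F_2 \vdash F_1 \para F_2 \escape \Delta_1, \Delta_2$ is assumed derivable via the rule \textsc{(Type Para)}, the function $\Gamma_1, F_1 \odot \Gamma_2, F_2$ must be defined on every identifier $u \in Dom(\Gamma_1) = Dom(\Gamma_2)$, and the corresponding application of $\uplus$ or $\mergesuff$ must succeed. This is the fact on which every bullet will hinge.

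For the first bullet, take $u \notin \actors{F_1} \cup \actors{F_2}$. By the defining clause of $\odot$ we have $\Gamma(u) = \Gamma_2(u) \uplus \Gamma_1(u)$, which forces $\uplus$ to be defined on this pair. From the definition of $\uplus$ on type sequences, this is the case precisely when the two sequences are identical except possibly for the placement of markings: the clauses for $\fineST$, output, and non-marked input all require matching prefixes and recurse on equal-length tails, and the marked/non-marked branching clause requires the same branches. Since the suffix relation $\sseq$ ignores markings (as can be read off its inference rules in the appendix), both $\Gamma_1(u) \sseq \Gamma_2(u)$ and $\Gamma_2(u) \sseq \Gamma_1(u)$ follow, giving the claimed length equality.

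For the second bullet, take $u \in \actors{F_1}$, so that $\Gamma(u) = \Gamma_1(u) \mergesuff \Gamma_2(u)$ and the partial operator $\mergesuff$ is defined on this pair. The operator $\mergesuff$ is $\uplus$ augmented with two proper-suffix clauses: $S' \mergesuff \s m(\vT).S$ requires $S' \sseq S$ (and then recurses), and $S' \mergesuff \&_{i\in I}\{?m_i(\vT_i).S_i\}$ requires $S' \sseq S_j$ for some branch $j$ (and then recurses on $S_j$). In either case, the left operand is required to be a suffix of the right. A straightforward induction on $\Gamma_2(u)$ then propagates this through the recursive calls: the $\uplus$ cases yield equality of the underlying action sequences (hence suffix), and the two new clauses yield a proper suffix. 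Thus $\Gamma_1(u) \sseq \Gamma_2(u)$. The third bullet is fully symmetric, swapping the roles of the two environments.

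The only delicate point is the inductive verification that $\mergesuff$ preserves the suffix relation even when the input branching clause chooses a specific branch $S_j$: one must check that the recursive call $S' \mergesuff S_j$, which by induction yields $S' \sseq S_j$, still allows us to conclude $S' \sseq \&_{i\in I}\{?m_i(\vT_i).S_i\}$, which is immediate from the inference rule for suffix on input branchings. No subtlety arises from the exclusion of the marked-branching case in $\mergesuff$, because such an operand is forbidden precisely when the proper-suffix extension would be ambiguous; wherever $\mergesuff$ is defined, the suffix conclusion holds.
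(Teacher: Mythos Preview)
The paper does not actually supply a proof for this lemma: it is stated in the appendix and then invoked in the \textsc{(Send)} case of the Subject Reduction argument, but no justification is given. Your approach---unfolding the definition of $\odot$ and checking that the definedness of $\uplus$ and $\mergesuff$ forces the suffix relations---is exactly the intended (and essentially the only) route, and your argument is correct. The first bullet is immediate from the precondition in the Merge-Mark definition that the operands of $\uplus$ be ``equal but for the markings''; the second and third bullets follow because the $\uplus$-inherited clauses of $\mergesuff$ already force equal underlying sequences, while the two additional clauses carry the explicit side condition $S'\sseq S$ (resp.\ $S'\sseq S_j$), which your induction on the right operand correctly propagates via the suffix rules for output and branching. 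In short, your proposal fills in precisely the argument the paper leaves implicit.
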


\medskip\noindent
{\bf Subject Reduction Theorem}
If $\Gamma\vdash F \escape\Delta$ and $F \red F'$, 
then there exist $\Gamma'$ such that 
$\Gamma'\vdash F'\escape\Delta'$, with 
$\Gamma'\sseq\Gamma$ and $\Delta'\sseq\Delta$. 
%
\begin{proof}
The proof is by induction on the derivation of $F\red F'$. We start
with the base cases:
\begin{description}
\item[{\sc (Ended)}] in this case the hypothesis are $\gv
[a\mapsto\varnothing]\, a\{\dead\}\escape\Delta$ and
$[a\mapsto\varnothing]\, a\{\dead\}\red \dead$. From the first
hypothesis we have that $\gv [a\mapsto\varnothing]$ and
$\gv_a \dead\escape\Delta$, hence $\Delta=\varnothing$, $\gv
a:[\fineST]$ and $\mathsf{NoMark}(\Gamma)$. Then $\gv\diamond$
is also derivable, and by {\sc (Type Dead)} we conclude
$\gv\dead\escape\Delta$ as desired. 
\item[{\sc (Top Spawn)}] In this case the hypothesis are 
$\val a=\act{e};e' 
 \red   \New{a}([a\mapsto\varnothing]\, a\{e\}\}\para e'))$ and
$\gv \val a=\act{e};e'\escape\Delta$, which comes form
$\Gamma=\Gamma_1\uplus\Gamma_2$,
$\Delta=\Delta_1,\Delta_2,a:[S_1\uplus S_2]$, 
$\Gamma_1,a:[S_1]\vdash_a e \escape\Delta_1$ and
$\Gamma_2,a:[S_2]\vdash e'\escape\Delta_2$. Then we also have 
$\Gamma_1, a:[S_1]\vdash [a\mapsto\varnothing]\, a\{e\}\escape
\Delta_1$, and by {\sc (Type Para)},
$\Gamma'\vdash [a\mapsto\varnothing]\, a\{e\}\para e'\escape
\Delta_1,\Delta_2$ where
$\Gamma'=\Gamma_1,a:[S_1]\odot\Gamma_2,a:[S_2]=
\Gamma_1\uplus\Gamma_2,a:[S_1\uplus S_2]$.
Then by {\sc (Type Res Conf)} we conclude
$\Gamma\vdash \New{a}([a\mapsto\varnothing]\,
a\{e\}\para e')\escape \Delta$. 
\item[{\sc (Spawn)}] in this case the hypothesis are 
$[b\mapsto M]b\{ \val a=\act{e};e'\} 
 \red   \New{a}([b\mapsto M]b\{ e'\}\para [a\mapsto\varnothing]\,
 a\{e\})$ and 
$\gv [b\mapsto M]b\{ \val a=\act{e};e'\} \escape\Delta$, which comes
form 
$\gv [b\mapsto M]$ and
$\gv_b \val a=\act{e};e'\escape\Delta$. Then the proof is similar to
the previous case.
\item[{\sc (Send)}] In this case the hypothesis are 
$[a\mapsto M]\, a\{e\}\,\para\, [b\mapsto M']\, b\{a\s m(\vc);e'\}
\red  
[a\mapsto M\cdot m(\vc)]\, a\{e\}\,\para\, [b\mapsto M']\,b\{e'\}$
and
$\gv [a\mapsto M]\, a\{e\}\,\para\, [b\mapsto M']\, b\{a\s
m(\vc);e'\}\escape\Delta$, which comes from
$\Gamma=\Gamma_1,F_1\odot\Gamma_2,F_2$, $\Delta=\Delta_1,\Delta_2$,
$\Gamma_1\vdash [a\mapsto M]\, a\{e\}\escape\Delta_1$ and
$\Gamma_2\vdash [b\mapsto M']\, b\{a\s m(\vc);e'\}\escape\Delta_2$.
The last two judgements must have been derived from
$(i)~\Gamma_1\vdash [a\mapsto M]$ and 
$(ii)~\Gamma_1\vdash_a e\escape\Delta_1$,
resp. $(iii)~\Gamma_2\vdash [b\mapsto M']$ and 
$(iv)~\Gamma_2\vdash_b a\s m(\vc);e'\escape\Delta_2$. 

From $(iv)$ he know that $\Gamma_2\vdash b: [[S]\s m(\vT).S_b]$,
$\Gamma_2\vdash a: [S_u\,.\,\marked{\&}_{i\in I}\{\marked{}?m(\vT).S,\
?m_i(\vT_i).S_i\}]$, $(*)~\vT\sseq \Gamma_2(\vc)\after{m}$ and
$\Gamma_2'\vdash_b e'\escape{\Delta_2}$, then also
$\Gamma'_2\vdash b\{e'\}\escape \Delta_2$, where 
$\Gamma'_2=\Gamma_2\,;\, a: [S_u.\&_{i\in I}\{?m(\vT).S,\
?m_i(\vT_i).S_i\}] \,;\,  b:[S_b]$. Notice that
$\Gamma'_2\sseq\Gamma_2$. Moreover, from $(iii)$ we also have
$\Gamma'_2\vdash [b\mapsto M']$ since
$Inputs(\Gamma'_2(b))=Inputs(\Gamma_2(b))$. Then by {\sc (Type Actor)}
we have $\Gamma'_2\vdash [b\mapsto M']\, b\{e'\}\escape\Delta_2$.

Let show that from $(i)$ we also have $\Gamma_1\vdash [a\mapsto M\cdot
m(\vc)]$ by {\sc (Type Mailbox)}. From  
$\Gamma_2\vdash a:[S_u\,.\,\marked{\&}_{i\in I}\{\marked{}?m(\vT).S,\
?m_i(\vT_i).S_i\}]$ we have $?m(\vT).S\in Inputs(\Gamma_2(a))$, then
by Lemma~\ref{lem:odot} $?m(\vT).S\in Inputs(\Gamma_1(a))$. It is then
sufficient to show that $\vT\sseq\Gamma_1(\vc)\after{m}$, that is 
if $c=a$ then $T\sseq S$ else $T\sseq\Gamma_1(c)$. 
From $(*)$ above we know that 
if $c=b$ then $T\sseq\Gamma_2(b)$, if $c=a$ then $T\sseq S$ else
$T\sseq\Gamma_2(c)$. By Lemma~\ref{lem:odot}
$\Gamma_2(c)\sseq\Gamma_1(c) $ and $\Gamma_2(b)\sseq\Gamma_1(b)$ hence
we have $\vT\sseq\Gamma_1(\vc)\after{m}$ as desired.

So we have $\Gamma_1\vdash [a\mapsto M\cdot m(\vc)]$, that together
with $(ii)$ gives $\Gamma_1\vdash [a\mapsto M\cdot m(\vc)]\,
a\{e\}\escape\Delta_1$. Then by {\sc (Type Para)} we have
$\Gamma'\vdash [a\mapsto M\cdot m(\vc)]\,a\{e\}\para [b\mapsto M']\,
b\{e'\}\escape\Delta_1,\Delta_2$ where
$\Gamma'=\Gamma_1\odot\Gamma'_2$, i.e. $\Gamma'\sseq\Gamma$ as
desired. 
\item[{\sc (Receive)}] In this case the hypothesis are
$[a\mapsto M\!\cdot\! m_j(\vc)\!\cdot\! M']\ 
 a\{ \rea{m_i(\vx_i)}{e_i}_{i\in I}\} \red  
  [a\mapsto M\!\cdot\! M']\ a\{e_j\{^\vc/_{\vx_j}\} \}$ with $j\in I$ 
and
$\gv [a\mapsto M\!\cdot\! m_j(\vc)\!\cdot\! M']\ 
 a\{ \rea{m_i(\vx_i)}{e_i}_{i\in I}\} \escape\Delta$, which comes form
$(i)~\gv [a\mapsto M\!\cdot\! m_j(\vc)\!\cdot\! M']$ and 
$(ii)~\gv_a  \rea{m_i(\vx_i)}{e_i}_{i\in I}\escape\Delta$, where
$\Delta=\escS_{i\in I} \, (\Delta_i,\{\vx_i:T_i\})$. From $(ii)$ and
$j\in I$ we have $\gv a: [ \&_{i\in I} \{?m_i(\vT_i).S_i\}]$ and
$\Gamma\, ;\, a:[S_j], \vx_j:\vT_j\vdash_a e_j\escape\Delta_j$.
Now, from $(i)$ and $?m_j(\vT_j).S_j\in Input(\Gamma(a))$ we have
$\vT\sseq\Gamma(\vc)\after{m}$, that is if $c=a$ then $T\sseq S_j$  
else $T\sseq \Gamma(c)$.

Let be $\Gamma'=\Gamma;a:[S_j]$, then we also have
$\vT\sseq\Gamma'(\vc)\after{m}$. 
From $\Gamma', \vx_j:\vT_j\vdash_a e_j\escape\Delta_j$
and $\vT\sseq\Gamma'(\vc)\after{m}$, by Substitution Lemma we have
$\Gamma'; \vc:\vT_j\uplus\Gamma'(\vc)\vdash_a
e_j\{^{\vc}/_{\vx}\}\escape\Delta_j$. Let be $\Gamma^*=\Gamma';
\vc:\vT_j\uplus\Gamma'(\vc)$. Note that $\Gamma^*\sseq\Gamma$ and from
$(i)$ we also have $\Gamma^*\vdash[a\mapsto M\cdot M']$, than by {\sc
  (Type Actor)} we conclude $\Gamma^*\vdash [a\mapsto M\cdot M']\vdash
a\{e_j\{^{\vc}/_{\vx}\}\}\escape\Delta_j$, with $\Delta_j\sseq\Delta$.
\end{description}

\noindent For the inductive cases we proceed by a case analysis on the
last rule tha has been applied:
\begin{description}
\item[{\sc (Par)}] In this case the hypothesis is $F_1\para F_2 \red
  F'_1\para F_2$ since $F_1\red F'_1$, and $\gv F_1\para
  F_2\escape\Delta$. Hence $\Delta=\Delta_1,\Delta_2$, 
$\Gamma=\Gamma_1,F_1\odot\Gamma_2,F_2$, $\Gamma_1\vdash
F_1\escape\Delta_1$ and $\Gamma_2\vdash F_2\escape\Delta_2$. Then by
inductive hypothesis we have $\Gamma'_1\vdash F'_1\escape\Delta'_1$
with $\Gamma'_1\sseq\Gamma_1$ and $\Delta'_1\sseq\Delta_1$. Then by
{\sc (Type Para)} we have $\Gamma'\vdash F'_1\para F_2\escape\Delta'$
with $\Delta'=\Delta'_1,\Delta_2$ and
$\Gamma'=\Gamma'_1,F'_1\odot\Gamma_2,F_2$ (note that we can guarantee
that $\actors{F'_1}\cap \actors{F_2}=\varnothing$ since we assumed
  that dynamically spawned actors have fresh names). Then observe that
  $\Delta'\sseq\Delta$, hence it is sufficient to show that 
$\Gamma'\sseq\Gamma$, which comes oberving that
$\Gamma'_1(u)=\Gamma_1(u)$ for all $u\notin\actors{F_1}$ while
$\Gamma'_1(u)\sseq\Gamma_1(u)$ for all
$u\in\actors{F_1}\cap\actors{F'_1}$. 

\item[{\sc (Res)}]  In this case the hypothesis is $\New{a}F \red
  \New{a}F'$ since $F\red F'$, and $\gv \New{a}F\escape\Delta,a:T$.
 The last judgement comes from $\Gamma,a:T\vdash F\escape \Delta$,
 which gives, by inductive hypothesis, $\Gamma',a:T'\vdash
 F'\escape\Delta'$ with $\Gamma',a:T'\sseq\Gamma,a:T$ and
 $\Delta'\sseq\Delta$, and we conclude 
$\gv \New{a}F'\escape\Delta,a:T'$ by {\sc (Type Res Conf)}.
\item[{\sc (Struct)}] In this case the hypothesis is $F \red F'$ 
 since $F\equiv F'$, $F'\red F''$ and $F''\equiv F'''$. This case
 comes by the fact that structural congruence preservs the typing. 
\end{description}

\end{proof}

\medskip\noindent
{\bf Lemma \ref{lem:mailbox}}
If $\varnothing\vdash Pr\escape \Delta$ 
and $Pr\red^* \New{\va}([a\mapsto M]a\{e\}\para F)$, 
then there exists 
$\Gamma$ such that $\Gamma\vdash [a\mapsto M]a\{e\}$ 
and for any $m(v)\in M$, there exists a matching input action
$?m(T).S$ that belongs to $Inputs(\Gamma(a))$.
\begin{proof} (Sketch) 
 Since actor initially have an empty mailbox, if $m(v)\in
  M$, then it must be $Pr\red^* \New{\va'}([b\mapsto N]\, b\{a\s
  m(v)e_b\}\para [a\mapsto M']a\{e'\})\red^*\New{\va}([a\mapsto
  M]a\{e\}\para F)$. By Subject Reduction  
  we know that the actor $b\{a\s m(v);e_b\}$ is well typed, that is
  $\Gamma_b\vdash a:[S_a.\&\marked{}\{?m(T).S,..\}]$, hence
  $?m(T).S\in Inputs(\Gamma_b(a))$. Now, if $?m(T).S\notin
  Inputs(\Gamma(a))$, it means that the input handler in $a$ has been
  consumed by another output that would also require the type
  $a:[S_a.\&\marked{}\{?m(T).S,..\}]$, which is not possible since
  marking is linear. 
\end{proof}

\begin{lemma}\label{lem:NoDeadlock}\mbox{ }\\
Let be $\varnothing\vdash Pr\escape\Delta$ with 
$\balanced(\Delta)$ and $Pr\red^*\New{\va}([a_1\mapsto
M_1]\,a_1\{e_1\}\para \ldots\para [a_k\mapsto M_k]\,
a_k\{e_k\})\not\red$.
Then it is not possible that every actor body $e_i$ is a (stuck) input
expression.
\end{lemma}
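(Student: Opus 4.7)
The approach is by contradiction, exploiting the deadlock-prevention information carried by the refined output types $T\s m(\vT).S$, whose first component records the target's residual type after processing $m$. Suppose every $e_i$ has the form $\rea{m_{ij}(\vx_{ij})}{e_{ij}}_{j\in J_i}$. By Subject Reduction, $\varnothing\vdash F\escape\Delta'$ with $\Delta'\sseq\Delta$; balancedness is preserved along $\red$ by a routine induction showing that every (Send)/(Receive) step consumes a marked input together with its matching output, keeping the fully-marked condition intact. Since $F\not\red$, no mailbox $M_i$ contains a message matching any $m_{ij}$, and by Lemma~\ref{lem:mailbox} mailbox messages are never orphaned, so the outputs matching $a_i$'s pending inputs must still sit inside some actor's body.

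Next I build a cycle of waiting actors. For each $a_i$, $\Delta'(a_i) = [\&_{j\in J_i}\{?m_{ij}(\vT_{ij}).S_{ij}\}]$, and balancedness singles out a branch $j_i\in J_i$ whose input is matched by a pending output $a_i\s m_{i j_i}$ somewhere in $F$. That output sits in the body $e_{i'}$ of another actor $a_{i'}$; since $e_{i'}$ is itself a stuck input expression, the output lies strictly inside one of $e_{i'}$'s input branches, whose index I call $j_{i'}$. Setting $\pi(i)=i'$ and iterating from any starting actor, the finiteness of the actor set forces a cycle $a_{i_1}\to a_{i_2}\to\ldots\to a_{i_n}=a_{i_1}$ with associated branch indices $j_{i_1},\ldots,j_{i_n}$.

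The contradiction is obtained by a size argument. The pending output in $a_{i_{k+1}}$'s body, viewed in $\Delta'(a_{i_{k+1}})$, has type $T^{(k)}\s m_{i_k j_{i_k}}(\ldots).S'$ with $T^{(k)} = [S_{i_k j_{i_k}}]$ modulo markings, and it occurs as a sub-term of $S_{i_{k+1} j_{i_{k+1}}}$. Define a tree-size $\mu$ on types that counts every action node, recursing both into the carried continuations of outputs and into the sequels of input branches. Then $\mu(S_{i_{k+1} j_{i_{k+1}}}) \geq \mu(T^{(k)}) + 1 = \mu(S_{i_k j_{i_k}}) + 1$, so travelling around the cycle yields $\mu(S_{i_1 j_{i_1}}) < \mu(S_{i_2 j_{i_2}}) < \ldots < \mu(S_{i_1 j_{i_1}})$, which is impossible for finite types.

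The main obstacle is pinning down the cycle construction rigorously: from balancedness one must extract, for each actor's top-level input branching, a uniquely determined matched pending output together with its precise syntactic location inside another actor's body, so that the indices $j_{i_k}$ are consistent and the size chain really closes on itself. A secondary but essential technical point, left implicit in the excerpt, is that balancedness itself is preserved under $\red$; that preservation would need to be established as an auxiliary lemma alongside Subject Reduction in order to have $\Delta'$ available as a balanced environment at the stuck configuration.
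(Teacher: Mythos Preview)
Your proposal is correct and follows essentially the same approach as the paper's proof sketch: both argue by contradiction, use balancedness to locate for each stuck input a matching output hidden inside another actor's body, extract a cycle of mutually waiting actors, and then derive a contradiction from the fact that the refined output types $T\s m(\vT).S$ embed the target's continuation type, which would force a recursive (hence non-finite) type along the cycle. Your version is considerably more explicit---you spell out the cycle construction via a map $\pi$ and formalise the contradiction by a tree-size measure---whereas the paper simply says the cyclic dependence ``would require recursive typing''; you also rightly flag that preservation of $\balanced(\Delta)$ under reduction is needed and left implicit in the paper.
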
 
\begin{proof} (Sketch) 
By Subject Reduction there exists $\Delta'$ such that\\
$\varnothing\vdash \New{\va}([a_1\mapsto
M_1]\,a_1\{e_1\}\para \ldots\para [a_k\mapsto M_k]\,
a_k\{e_k\})\escape\Delta'$, hence there exist 
$\Gamma_1,...,\Gamma_k$ such that
$\Gamma_1\odot\ldots\odot\Gamma_k\subseteq\Delta'$ and
$\Gamma_i\vdash [a_i\mapsto M_i]\, a_i\{e_i\}\escape\Delta_i$ for
$i=1,..,k$. By contradiction, assume that every actor body $e_i$ is a
(stuck) input expression. Then we have that for all $i=1,..,k$,  
$\Gamma_i\vdash a_i:[\&\{?m^i_\ell(x^i_\ell).S^i_\ell\}]$. Moreover, 
since the initial system is balanced and since by hypothesis no
matching message is already in the mailbox, for any actor $a_i$ there
must be an actor $a_j$ whose body sends a matching message, i.e. 
$e_j$ must contain the sub-expression $a_i\s m^i_\ell(c)$ for some
message $m_\ell$. However, this is not possible since the typing of
(the output action of) $a_j$ depends on the (continuation of the
input) type of $a_i$ and so on yielding a cyclic dependence between a
set of actors which would require recursive typing.
\end{proof}

\noindent
{\bf Safety Theorem} Let be $\varnothing\vdash Pr\escape\Delta$ with 
$\balanced(\Delta)$. If $Pr\red^* F$ then either $F=\dead$ or $F\red
F'$ for some $F'$.
\begin{proof} (Sketch)
Let prove it by contradiction: assume that there exists a
configuration $F^*\neq\dead$ such that $F^*\not\red$. 
We can assume $F^*\equiv \New{\va}([a_1\mapsto M_1]\,a_1\{e_1\}\para
\ldots\para [a_k\mapsto M_k]\, a_k\{e_k\})$ with
$\{a_1,...,a_k\}\subseteq \va$. Then by Subject Reduction 
there exists $\Delta'$ such that $\varnothing\vdash
F^*\escape\Delta'$, with
$\Delta'\sseq\Delta$ and $\balanced(\Delta')$. Then we also have that
there exist $\Gamma_1,...,\Gamma_k$ such that
$\Gamma_1\odot\ldots\odot\Gamma_k\subseteq\Delta'$ and
$\Gamma_i\vdash [a_i\mapsto M_i]\, a_i\{e_i\}\escape\Delta_i$ for
$i=1,..,k$. 
We have one of the following cases: 
\begin{itemize}
\item for all $i\in\{1,..,k\}$, if $e_i=a_j\s m(\vc);e$ then we
  have that $\Gamma_i\vdash
  a_j:[S_u.\&\marked{}\{\marked{?}m(\vT).S,..\}]$ and by 
  definition of $\odot$ we have that
  $\Gamma_j(a)=[S'_u.\&\{?m(\vT).S,..\}]$ for some $S'_u\sseq S_u$,
  that is the input handler of the message $m$ is still in the body of
  the actor $a_j$, i.e. $e_j\neq\dead$, that is 
  $a_j\in\{a_1,...,a_k\}$ and $F^*\red$ contradicting the assumption.
\item for all $i\in\{1,..,k\}$, if $e_i=\dead$ we have two cases:
  if $M_i=\varnothing$ then the {\sc (Ended)} reduction rule 
  applies, giving a contradiction. On the other hand, let be 
  $M_i\neq\varnothing$, from $e_i=\dead$ and the typing we have
  $\Gamma_i\vdash a_i:[\fineST]$, 
  but by Lemma~\ref{lem:mailbox} and $M_i\neq\varnothing$ we have
  $\Gamma\not\vdash a_i:[\fineST]$, giving the desired contradiction.
\item If an actor body starts with the spawning of a new actor, then
  trivially $F^*\red$, giving the contradiction. 
\item Finally, we have the case where every actor body $e_i$ starts
  with an input expression, which is not possible by
  Lemma~\ref{lem:NoDeadlock}. 
\end{itemize}

\end{proof}

\end{document}